\newtheorem{theorem}{Theorem}
\newtheorem{proposition}[theorem]{Proposition}
\newtheorem{lemma}[theorem]{Lemma}
\newcommand{\be}{\begin{equation}}
\newcommand{\ee}{\end{equation}}
\newcommand{\bea}{\begin{eqnarray}}
\newcommand{\eea}{\end{eqnarray}}
\newcommand{\ba}{\begin{array}}
\newcommand{\ea}{\end{array}}
\newcommand{\bean}{\begin{eqnarray*}}
\newcommand{\eean}{\end{eqnarray*}}
\renewcommand{\L}{\mathcal{L}}
\newcommand{\M}{\mathcal{M}}
\newcommand{\Z}{\mathbb{Z}}
\renewcommand{\d}{\partial}
\newcommand{\N}{{\mathbb{N}}}
\newcommand{\pa}{\partial}
\renewcommand{\pa}{\partial}
\numberwithin{equation}{section}
\begin{document}

\title{Quantum torus symmetries of multicomponent modified KP hierarchy and reductions}
\author{Chuanzhong Li$^{1}$,\ Jipeng Cheng$^{2}$, }
\dedicatory {
$^1$ Department of Mathematics,  Ningbo University, Ningbo, 315211, China\\
$^2$ School of Mathematics, China University of
Mining and Technology, Xuzhou,   221116, China}
\thanks{Email addresses: lichuanzhong@nbu.edu.cn (C. Z. Li), chengjp@cumt.edu.cn (J. P. Cheng).}
\begin{abstract}
In this paper, we construct the multicomponent  modified KP hierarchy  and its  additional symmetries. The additional symmetries constitute an interesting multi-folds quantum torus type Lie algebra. By a reduction, we also construct the constrained multicomponent  modified KP hierarchy  and its Virasoro type additional symmetries.  \\
\textbf{Keywords}: multicomponent modified KP hierarchy,  quantum torus symmetries, additional symmetries, constrained multicomponent  modified KP hierarchy, Virasoro symmetry.\\
\textbf{PACS}: 02.30.Ik\\
\textbf{2010 MSC}: 35Q53, 37K10, 37K40
\end{abstract}
\maketitle

\tableofcontents
%%%%%%%%%%%%%%%%%%%%%%%%%%%%%%%%%%%%%%%%%%%%%%%%%%%%%%
\section{Introduction}
%%%%%%%%%%%%%%%%%%%%%%%%%%%%%%%%%%%%%%%%%%%%%%%%%%%%%%
The modified Kadomtsev-Petviashvili (mKP) hierarchy is one of the most important research objects
in the mathematical physics and integrable systems introduced in the early 1980s
\cite{djkm1,djkm2} and then  several versions \cite{chengy,dickey,takebe1,
takebe2,kacvdl,Zabrodin}, particularly the Kupershmidt-Kiso version \cite{kuper1,kiso} are developed. A Miura transformation will transfer it to the the KP systems like the relationship between KdV and mKdV equation.

Additional symmetries are a kind of important symmetries depending explicitly on the space and time variables\cite{OS}.
Recently, much important work about the KP type systems has been obtained in the additional symmetries, such as
the B and C type KP systems\cite{li1}.

As we know, the constrained KP hierarchy \cite{chengyiliyishenjmp1995} is another kind of KP type integrable hierarchy and the
additional symmetry flows  are not easily consistent with the
reduction on the  Lax operator for the constrained KP hierarchy \cite{aratyn1997}.
Therefore it is highly non-trivial  to find  a suitable  additional
symmetry  flows on eigenfunctions and adjoint eigenfunctions  to make the constraint compatible with its additional symmetry.
Besides, other KP type and Toda type systems  also have interesting structure of additional symmetries\cite{ourBlock,dtyptds,ghostdKP} such as  Block algebra \cite{Block}. The Block type additional structure was found in \cite{ourBlock} as a kind of
additional symmetry of the bigraded Toda hierarchy and later  a series of studies on integrable systems and Block algebras such as in \cite{dispBTH,dtyptds,blockDS,blockdDS} were done. After a quantization, the Block type Lie algebra becomes the so-called quantum torus Lie algebra \cite{takasakiquantum,torus,JNS}. Also a supersymmetric Block algebra was also found in supersymmetric BKP systems in \cite{NPB}.

The multi-component KP hierarchy is an important matrix-formed generalization of the original KP hierarchy and its additional symmetry was well-studied in \cite{dicheymdern}.
  Later with reduction, a quite natural
constrained multi-component KP hierarchy was studied by Y. Zhang in \cite{zhangJPA}.
 Recently the research on random matrices and non-intersecting Brownian motions and the study of moment matrices with regard to several weights showed that the determinants of such moment matrices these
determinants are tau-functions of the multi-component KP-hierarchy \cite{randommatrixCMP}.
 For multicomponent discrete integrable systems, the multicomponent Toda hierarchy recently attracts a lot of valuable research such as \cite{manas,EMTH,EZTH}.
  In \cite{TMP}, we considered the constrained multi-component KP hierarchy and identified its algebraic structure.
  In \cite{tian2011}, a q-deformed modified KP hierarchy and its additional
symmetries are constructed. Later in \cite{jipengmkp}, the additional symmetries of the mKP hierarchy and the constrained mKP hierarchy are constructed, and the corresponding actions
on two tau functions  are obtained.
For the multicomponent KP hierarchy, compact expressions for symmetry flows,  vertex operators and the associated Hamiltonian formalism was studied in \cite{orlov}. Also the relations to the isomonodromy problem and the related vector field action on Riemann surfaces of the system were later developed in \cite{orlov2}.
A natural question is what about the additional symmetries of the constrained multicomponent mKP hierarchy and other similar studies.

This paper will be organized in the following way. In Section 2, some
basic facts about the mKP hierarchy are introduced. The  multicomponent  modified KP hierarchy are constructed  in Section 3. Then in Section 4,
the additional symmetry of the multicomponent  modified KP hierarchy and the symmetry constitutes a quantum torus type Lie algebra. Next in Section 5, a constrained multicomponent  modified KP hierarchy are given. After that in Section 6, the  additional symmetry of the
constrained multicomponent  modified KP hierarchy are investigated.

%%%%%%%%%%%%%%%%%%%%%%%%%%%%%%%%%%%%%%%%%%%%%%%%%%%%%%
\section{The modified KP hierarchy and multi-component modified KP hierarchy}
In this section, we will consider the algebra $g$ of the pseudo-differential operators
\begin{equation}\label{pdo}
  g=\{\sum_{i\ll \infty}u_i\pa^i\},
\end{equation}
where $\pa=\pa_x$ and $u_i=u_i(t_1=x,t_2,\cdots)$. For any operator $A=\sum_{i}a_i\pa^i\in g$,
$A_{\geq k}=\sum_{i\geq k}a_i\pa^i$ and $A_{<k}=\sum_{i<k}a_i\pa^i$.
A anti-involution operator $*$ satisfies: $(AB)^*=B^*A^*$,
$\pa^*=-\pa$, $f^*=f$ for any scalar function $f$.

The mKP hierarchy can be defined as the following Lax equation in Kupershmidt-Kiso version \cite{kuper1,kiso}
\begin{equation}\label{laxequation}
    \pa_{t_n} L=[(L^n)_{\geq 1},L],\ n=1,2,3,\cdots,
\end{equation}
with the Lax operator $L\in g$ as below
\begin{equation}\label{laxoperator}
    L=\pa+u_0+u_1\pa^{-1}+u_2\pa^{-2}+u_3\pa^{-3}+\cdots.
\end{equation}

The  mKP hierarchy
(\ref{laxequation}) contains the well-known mKP equation
\begin{equation}\label{mkpequation}
4u_{tx}=(u_{xxx}-6u^2u_{x})_x+3u_{yy}+6u_{x}u_{y}+6u_{xx}\int u_{y}dx.
\end{equation}

The Lax operator $L$ for
the mKP hierarchy can also be rewritten in terms of a dressing operator $Z$,
\begin{equation}\label{dressinglaxmdkp}
  L=Z\pa Z^{-1},
\end{equation}
where $Z$ is given by
\begin{equation}\label{dressingmdkp}
  Z=z_0+z_1\pa^{-1}+z_2\pa^{-2}+\cdots.
\end{equation}
Then the Lax equation (\ref{laxequation}) is equivalent to the following Sato equation
\begin{equation}\label{dressingmdkpequation}
  \partial_{t_n}Z=-L^n_{\leq 0}Z.
\end{equation}

In the next section, we will introduce an $N$-component modified KP hierarchy which contains $N$ infinite families of time variables $t_{\alpha, n}, \alpha=1,\ldots,N, n=1,2,\ldots$. The coefficients $A,u_1, u_2,\ldots$ of the Lax operator
\begin{equation}\label{Lax}L=A\pa +u_1\pa^{-1}+u_2\pa^{-2}+\ldots
\end{equation}
 are $N\times N$ matrices and $A=diag(a_1,a_2,\dots,a_N)$. There are another $N$ pseudodifferential operators $R_1,\ldots, R_N$ of the form
$$R_{\alpha}=E_{\alpha}+u_{\alpha, 1}\pa^{-1}+u_{\alpha, 2}\pa^{-2}+\ldots,$$ where $E_{\alpha}$ is the $N\times N$ matrix with $1$ on the $(\alpha,\alpha)$-component and zero elsewhere, and $u_{\alpha, 1}, u_{\alpha, 2},\ldots$ are also $N\times N$ matrices. The operators $L, R_1,\ldots, R_N$ satisfy the following conditions:
$$LR_{\alpha}=R_{\alpha} L, \quad R_{\alpha}R_{\beta}=\delta_{\alpha\beta}R_{\alpha}, \quad \sum_{\alpha=1}^N R_{\alpha}=E.$$ The Lax equations are:
\begin{equation*}
\frac{\pa L}{\pa t_{\alpha, n}}=[B_{\alpha, n}, L],\hspace{1cm}\frac{\pa R_{\beta}}{\pa t_{\alpha, n}}=[B_{\alpha, n}, R_{\beta}],\hspace{1cm}B_{\alpha, n}:=(L^n R_{\alpha})_{\geq 1}.
\end{equation*}
 The operator $\pa$ now is equal to $a_1^{-1}\pa_{t_{11}}+\ldots +a_N^{-1}\pa_{t_{N1}}$.
In fact the Lax operator $L$ and $R_{\alpha}$ can have the following dressing structures
\[L=PA\d P^{-1}, \ \ R_{\alpha}=PE_{\alpha}P^{-1},\]
where
\[P=E+\sum_{i=1}^{\infty}P_i\d^{-i},\] and the dressing operator $P$ satisfies the following Sato equations
\begin{equation*}
\frac{\pa P}{\pa t_{\alpha, n}}=-(L^n R_{\alpha})_{\leq 0}P.
\end{equation*}

Define the eigenfunction $w$ and the adjoint eigenfunction $w^*$ of the $N$-component  mKP hierarchy in the following way:
\begin{eqnarray}
w(t,\lambda)&=& P\left(e^{\xi(t,\lambda)}\right)=\hat w(t,\lambda)e^{\xi(t,\lambda)},\label{wavefunction}\\
w^*(t,\lambda)&=&(P^{-1}\pa^{-1})^*\left(e^{-\xi(t,\lambda)}\right)=\hat w^*(t,\lambda)\lambda^{-1}e^{-\xi(t,\lambda)},\label{adwavefunction}
\end{eqnarray}
with
\begin{eqnarray}
\xi(t,\lambda)&=&x\lambda+t_2\lambda^2+t_3\lambda^3+\cdots,\\
\hat{w}(t,\lambda)&=&z_0+z_1\lambda^{-1}+z_2\lambda^{-2}+\cdots,\label{hatwfunction}\\
\hat w^*(t,\lambda)&=&z_0^{-1}+z_1^*\lambda^{-1}+z_2^*\lambda^{-2}+\cdots,\label{hatwstarfunction}
\end{eqnarray}

and the eigenfunction $w$ and the adjoint eigenfunction $w^*$ of the $N$-component  mKP hierarchy satisfy
\begin{equation}\label{phipsieq}
 L(w(t,\lambda))=\lambda w(t,\lambda),\  L^*(w^*(t,\lambda))=\lambda w^*(t,\lambda),\ w_{t_{\alpha, n}}=(L^n R_{\alpha})_{\geq 1}w,\quad w^*_{t_{\alpha, n}}=-w^*\Big(\pa^{-1}(L^n R_{\alpha})_{\geq 1}^*\pa\Big).
\end{equation}
In the second equation here
it is assumed that the operators $\partial$ entering $\pa^{-1}(L^n R_{\alpha})_{\geq 1}^*\pa$ act to the left as $f\partial=f_x$.

\section{Additional symmetries of the multicomponent  mKP Hierarchy}
In order to define the additional symmetry for the multicomponent mKP hierarchy, the Orlov-Schulman operator  $M$ for the multicomponent mKP
hierarchy can be introduced as follows,
\begin{equation}\label{osoperatormkp}
    M=P\Gamma P^{-1},
\end{equation}
with $\Gamma =\sum_{\alpha=1}^N\sum_{i=1}^\infty it_{\alpha,i}E_{\alpha}\pa^{i-1}$.
From the fact $[\pa_{t_{\alpha, n}}-E_{\alpha}\pa^n,\Gamma]=0$, one can obtain
\begin{equation}\label{osmkpeq}
    [\pa_{t_{\alpha, n}}-(L^n R_{\alpha})_{\geq 1},M]=0,
    \quad i.e.\quad \pa_{t_{\alpha, n}} M=[(L^n R_{\alpha})_{\geq 1},M].
\end{equation}
Further
\begin{eqnarray}
L(w(t,\lambda))=\lambda w(t,\lambda),&&M(w(t,\lambda))=\pa_{\lambda}w(t,\lambda),
\end{eqnarray}
and
\begin{eqnarray}
[L,M]=1.
\end{eqnarray}

Then basing on a quantum parameter $q$, the additional flows for the time variable $t_{m,n,\alpha},t_{m,n,\alpha}^*$ are
defined respectively as follows
\begin{equation}
\partial_{t_{m,n,\alpha}} P=-(M^mL^nR_{\alpha})_{\leq 0}P,\  \partial_{t^*_{m,n,\alpha}}P =-(e^{mM}q^{nL}R_{\alpha})_{\leq 0}P,\ m,n \in \N,
\end{equation}
or equivalently rewritten as
\begin{equation}
\dfrac{\partial L}{\partial t_{m,n,\alpha}}=-[(M^mL^nR_{\alpha})_{\leq 0},L], \qquad
\dfrac{\partial M}{\partial t_{m,n,\alpha}}=-[(M^mL^nR_{\alpha})_{\leq 0},M],
\end{equation}

\begin{equation}
\dfrac{\partial L}{\partial t^*_{m,n,\alpha}}=-[(e^{mM}q^{nL}R_{\alpha})_{\leq 0},L], \qquad
\dfrac{\partial M}{\partial t^*_{m,n,\alpha}}=-[(e^{mM}q^{nL}R_{\alpha})_{\leq 0},M].
\end{equation}

Define the generator of the additional symmetries for multicomponent mKP hierarchy in the following double expansion
\begin{equation}\label{generatoraddsymmmkp}
    Y_{\alpha}(\lambda,\mu)=\sum_{m=0}^\infty\frac{(\mu-\lambda)^m}{m!}\sum_{l=-\infty}^\infty
    \lambda^{-l-m-1}(M^mL^{m+l}R_{\alpha})_{\leq 0}.
\end{equation}

Then by using the following fact

\begin{eqnarray}
{\rm res}_z\delta(\lambda,z)f(z)=f(\lambda),
\end{eqnarray}
one can
 find $Y_{\alpha}(\lambda,\mu)$ can be rewritten into a nice form similarly as \cite{dickeympla} (for KP system), which is listed in the proposition below.
\begin{proposition}\label{ylambdamuwpaw}
The generating operator $Y_{\alpha}(\lambda,\mu)$ have the following nice form as
\begin{eqnarray}
Y_{\alpha}(\lambda,\mu)&=&(R_{\alpha}w(t,\mu))\cdot\pa^{-1}\cdot w^*(t,\lambda)\pa.
\end{eqnarray}
And their combination $\sum_{\alpha=1}^N$ will produce the following generating operator
\begin{equation}\label{geneaddsymmmkpwave}
    Y(\lambda,\mu)=\sum_{\alpha=1}^NY_{\alpha}(\lambda,\mu)=w(t,\mu)\cdot\pa^{-1}\cdot w^*(t,\lambda)\pa.
\end{equation}
\end{proposition}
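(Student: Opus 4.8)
The plan is to prove a single operator identity from which both formulas follow by a Taylor resummation. For an arbitrary pseudodifferential operator $A$, I would first establish the building block
\[
\sum_{l\in\Z}\la^{-l-1}\,(AL^lR_\al)_{\leq 0}=\big(A\,R_\al w(t,\la)\big)\cdot\pa^{-1}\cdot w^*(t,\la)\,\pa .
\]
I would note that the right-hand side has order $\leq 0$: the factor $\pa^{-1}$ lowers the order by one and the trailing $\pa$ restores it, so the product sits precisely in the range of the projection $(\cdot)_{\leq 0}$. This explains both the appearance of $(\cdot)_{\leq 0}$, rather than the strictly negative truncation of the KP case, and the extra $\pa$ on the right, the latter being the operator counterpart of the $\la^{-1}$ and $\pa^{-1}$ twists already built into the modified adjoint wave function $w^{*}$ in (\ref{adwavefunction}).

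To establish the building block I would pass to the exponential representation, writing $L=PA\pa P^{-1}$, $R_\al=PE_\al P^{-1}$, $w=Pe^{\xi}$ and $w^{*}=(P^{-1}\pa^{-1})^{*}e^{-\xi}$, and use that $\pa$ acts on $e^{\xi}$ as multiplication by $\la$. Because $L$ and $R_\al$ commute and $R_\al w$ is an eigenfunction, $L^{l}R_\al w(t,\la)=\la^{l}R_\al w(t,\la)$, so the bi-infinite sum $\sum_{l}\la^{-l-1}L^{l}$ is a formal resolvent. Applying the delta-function relation $\mathrm{res}_z\delta(\la,z)f(z)=f(\la)$ collapses this sum to an object localized at the spectral value $\la$; re-expressing the residue through the pairing $\pa^{-1}w^{*}\pa$ and retaining the $(\cdot)_{\leq 0}$ part reproduces the right-hand side. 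The matrix structure is handled using $E_\al E_\beta=\de_{\al\beta}E_\al$, which lets every $R_\al$ factor pass cleanly through the exponential.

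Granting the building block, I would specialize $A=M^{m}$ and re-index the double sum in (\ref{generatoraddsymmmkp}) by $k=m+l$, turning $Y_\al(\la,\mu)$ into $\sum_{m}\frac{(\mu-\la)^{m}}{m!}\sum_{k}\la^{-k-1}(M^{m}L^{k}R_\al)_{\leq 0}$; the inner sum then collapses to $\big(M^{m}R_\al w(t,\la)\big)\pa^{-1}w^{*}(t,\la)\pa$. The key structural point, which I would verify from $M=P\Gamma P^{-1}$ and $Mw=\pa_\la w$, is that $M^{m}\big(R_\al w(t,\la)\big)=\pa_\la^{m}\big(R_\al w(t,\la)\big)$; this reduces to the computation $\Gamma E_\al e^{\xi}=E_\al(\pa_\la\xi)e^{\xi}$, which holds because $\Gamma=\sum_{\beta,i}it_{\beta,i}E_\beta\pa^{i-1}$ acts on $e^{\xi}$, after projection by $E_\al$, as multiplication by $\pa_\la\xi$. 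The surviving series $\sum_{m}\frac{(\mu-\la)^{m}}{m!}\pa_\la^{m}\big(R_\al w(t,\la)\big)$ is exactly the Taylor expansion of $R_\al w(t,\mu)$ about $\la$, giving $Y_\al(\la,\mu)=\big(R_\al w(t,\mu)\big)\pa^{-1}w^{*}(t,\la)\pa$. Summing over $\al$ and invoking $\sum_{\al}R_\al=E$ then yields the combined formula $Y(\la,\mu)=w(t,\mu)\pa^{-1}w^{*}(t,\la)\pa$.

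I expect the main obstacle to be the rigorous proof of the building block in this matrix-valued modified setting: one must keep track of the noncommuting coefficients, confirm the precise placement of $R_\al$ and of the trailing $\pa$, and justify interchanging the bi-infinite sum $\sum_{l}$ with the projection $(\cdot)_{\leq 0}$ so that the delta-function localization is legitimate. Once that identity is secured, the passage to $Y_\al$ and to $Y$ is the formal Taylor resummation and the $\sum_{\al}R_\al=E$ reduction described above.
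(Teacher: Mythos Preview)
Your proposal is correct and follows essentially the same route as the paper's proof: both arguments pass to the dressing representation, convert the $(\cdot)_{\leq 0}$ projection of $M^{m}L^{m+l}R_{\alpha}$ into a $z$-residue against the wave functions, collapse the $l$-sum via the delta identity $\mathrm{res}_{z}\,\delta(\lambda,z)f(z)=f(\lambda)$, identify the $m$-sum as the Taylor shift $e^{(\mu-\lambda)\partial_{\lambda}}$ acting on $R_{\alpha}w(t,\lambda)$, and finish with $\sum_{\alpha}R_{\alpha}=E$. The only organizational difference is that you package the inner computation as a building-block lemma for general $A$ before specializing to $A=M^{m}$, whereas the paper carries $M^{m}$ through from the start and uses the explicit expansion $f\partial^{-1}=\sum_{i\geq 1}\partial^{-i}f^{(i-1)}$ to extract the $\leq 0$ part; this is the concrete ``re-expressing the residue through the pairing $\partial^{-1}w^{*}\partial$'' that your sketch alludes to but does not spell out.
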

\begin{proof}
The proof is similar as the proof in \cite{jipengmkp,orlov} which is about the mKP hierarchy and the multicomponent KP hierarchy.
Firstly,  using \begin{eqnarray}
f\pa^{-1}=\sum_{i=1}^\infty\pa^{-i}f^{(i-1)}.
\end{eqnarray}
and
\begin{eqnarray}
{\rm res}_z\delta(\lambda,z)f(z)=f(\lambda),
\end{eqnarray} one can rewrite $(M^mL^{m+l}R_{\alpha})_{\leq0}$ into
\begin{eqnarray*}
&&(M^mL^{m+l}R_{\alpha})_{\leq0}={\rm res}_\pa(M^mL^{m+l}R_{\alpha}\pa^{-1})+\sum_{i=1}^\infty\pa^{-i}{\rm res}_\pa(\pa^{i-1}M^mL^{m+l}R_{\alpha})\\
&=& {\rm res}_\pa(M^mZ\pa^{m+l}E_{\alpha}Z^{-1}\pa^{-1})+\sum_{i=1}^\infty\pa^{-i}{\rm res}_\pa(\pa^{i-1}M^mZ\pa^{m+l}E_{\alpha}Z^{-1})\\
&=& {\rm res}_z\Big(M^mZ\pa^{m+l}E_{\alpha}(e^\xi)\cdot(Z^{-1}\pa^{-1})^*(e^{-\xi})\Big)+
\sum_{i=1}^\infty\pa^{-i}{\rm res}_z\Big(\pa^{i-1}M^mZ\pa^{m+l}E_{\alpha}(e^\xi)\cdot Z^{-1*}(e^{-\xi})\Big)\\
&=&{\rm res}_z\Big(z^{m+l}R_{\alpha}\pa_z^mw(t,z)w^*(t,z)\Big)-\sum_{i=1}^\infty \pa^{-i}{\rm res}_z \Big(z^{m+l}(R_{\alpha}\pa_z^mw(t,z))^{(i-1)}w^*(t,z)_x\Big)\\
&=&{\rm res}_z\Big(z^{m+l}R_{\alpha}\pa_z^m
w(t,z)\cdot\big(w^*(t,z)-\pa^{-1}w^*(t,z)_x\big)\Big)={\rm res}_z\Big(z^{m+l}R_{\alpha}\pa_z^m
w(t,z)\cdot\pa^{-1}w^*(t,z)\pa\Big),
\end{eqnarray*}
which further leads to
\begin{eqnarray*}
Y_{\alpha}(\lambda,\mu)&=&{\rm res}_z\left(\sum_{m=0}^\infty\sum_{l=-\infty}^\infty
\frac{z^{m+l}}{\lambda^{m+l+1}}\frac{1}{m!}(\mu-\lambda)^mR_{\alpha}\pa_z^mw(t,z)\cdot\pa^{-1}w^*(t,z)\pa\right)\\
&=&{\rm res}_z\left(\delta(\lambda,z)R_{\alpha}e^{(\mu-\lambda)\pa_z}
w(t,z)\cdot\pa^{-1}w^*(t,z)\pa\right)\\
&=&R_{\alpha}e^{(\mu-\lambda)\pa_\lambda} w(t,\lambda)\cdot\pa^{-1}w^*(t,\lambda)\pa
=(R_{\alpha}w(t,\mu))\cdot\pa^{-1}\cdot w^*(t,\lambda)\pa.
\end{eqnarray*}
The identity \eqref{geneaddsymmmkpwave} can be derived by using the fact
\[\sum_{\alpha=1}^NR_{\alpha}=E.\]
\end{proof}

Inspired by \cite{orlov3,orlov4,orlov5},  the ghost
symmetry flow for the multicomponent mKP hierarchy can be defined by the following equations
\begin{eqnarray}
\pa_\alpha L=\left[(R_{\alpha}w)\cdot\pa^{-1}\cdot w^*\pa,\ L\right],\quad \pa_\alpha P=(R_{\alpha}w)\cdot\pa^{-1}\cdot w^*\pa \cdot P.\label{paalphalz}
\end{eqnarray}
The actions of $\pa_\alpha$ on the wave function $\phi$ ( a linear combination of the eigenfunctions) and the adjoint wave function  $\phi^*$
( a linear combination of the adjoint eigenfunctions) are listed below.
\begin{eqnarray}
\pa_\alpha \phi=(R_{\alpha}w) \int\phi_xw^*dx,\quad \pa_\alpha \phi^*=-w^*\int(R_{\alpha}w)\phi^*_xdx.
\end{eqnarray}
The actions of $\pa_Y$ on the wave function  $\phi$ and the adjoint wave function  $\phi^*$ are listed below.
\begin{eqnarray}
\pa_Y \phi=w \int\phi_xw^*dx,\quad \pa_Y \phi^*=-w^*\int w\phi^*_xdx.
\end{eqnarray}

We can further get
The additional flows of $\partial_{t_{l,k,\alpha}}$ are  symmetry flows of the multicomponent mKP hierarchy, i.e. they commute with all $\partial_{t_{\beta, n}}$ flows of the  multicomponent mKP hierarchy.

According to the action of  $\partial_{t^*_{l,k,\alpha}}$ and $\partial_{t_{\beta,n}}$ on the
Lax operator $L$,  we can rewrite the quantum torus flow $\partial_{t^*_{l,k,\alpha}}$¡¡in terms of a combination of $\partial_{t_{p,s}}$ flows
\begin{eqnarray*}
\partial_{t^*_{l,k,\alpha}}L &=& [-(\sum_{p,s=0}^{\infty}\frac{l^p(k\log q)^sM^pL^sR_{\alpha}}{p!s!})_{\leq 0},L]\\
 &=&\sum_{p,s=0}^{\infty}\frac{l^p(k\log q)^s}{p!s!}\partial_{t_{p,s,\alpha}}L,
\end{eqnarray*}
which further leads to
\begin{eqnarray*}
[\partial_{t^*_{l,k,\alpha}},\partial_{t_{\beta,n}}]L &=& [\sum_{p,s=0}^{\infty}\frac{l^p(k\log q)^s}{p!s!}\partial_{t_{p,s,\alpha}},\partial_{t_{\beta,n}}]L\\
&=& \sum_{p,s=0}^{\infty}\frac{l^p(k\log q)^s}{p!s!}[\partial_{t_{p,s,\alpha}},\partial_{t_{\beta,n}}]L\\
 &=&0.
\end{eqnarray*}Then we can derive the following theorem.
\begin{theorem}
The additional flows $\partial_{t^*_{l,k,\alpha}}$ are  symmetries of the multicomponent mKP hierarchy, i.e. they commute with all $\partial_{t_{\beta,n}}$ flows of the  multicomponent mKP hierarchy.
\end{theorem}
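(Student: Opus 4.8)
The plan is to derive this commutativity directly from the corresponding statement for the \emph{polynomial} additional flows $\partial_{t_{p,s,\alpha}}$, which has already been recorded above: they commute with every $\partial_{t_{\beta,n}}$. The bridge between the two families is the observation that the quantum-torus generator $e^{lM}q^{kL}R_{\alpha}$ is nothing but the exponential generating series assembled from the monomial generators $M^{p}L^{s}R_{\alpha}$ that produce the polynomial flows. So the whole argument is linearity: expand, recognize each coefficient as a polynomial additional flow, and invoke the already-known vanishing of the brackets term by term.

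Concretely, first I would write $q^{kL}=e^{(k\log q)L}$ and multiply the two exponential series, keeping the ordering with $M^{p}$ to the left of $L^{s}$ so that it matches the generator $M^{p}L^{s}R_{\alpha}$ exactly:
\begin{equation*}
e^{lM}q^{kL}R_{\alpha}=\Big(\sum_{p\ge 0}\tfrac{l^{p}}{p!}M^{p}\Big)\Big(\sum_{s\ge 0}\tfrac{(k\log q)^{s}}{s!}L^{s}\Big)R_{\alpha}=\sum_{p,s\ge 0}\frac{l^{p}(k\log q)^{s}}{p!\,s!}\,M^{p}L^{s}R_{\alpha}.
\end{equation*}
Applying the negative-order projection $(\cdot)_{\le 0}$ and the bracket $-[\,\cdot\,,L]$ term by term, linearity yields the expansion already displayed just before the theorem,
\begin{equation*}
\partial_{t^{*}_{l,k,\alpha}}L=-\big[(e^{lM}q^{kL}R_{\alpha})_{\le 0},L\big]=\sum_{p,s\ge 0}\frac{l^{p}(k\log q)^{s}}{p!\,s!}\,\partial_{t_{p,s,\alpha}}L,
\end{equation*}
and the identical computation with $-[\,\cdot\,,M]$ in place of $-[\,\cdot\,,L]$ gives the same expansion for the action on $M$ (hence on $P$ and on each $R_{\beta}$), so that the starred flow is a linear combination of polynomial flows on all the defining data.

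With this expansion in hand, the conclusion follows from the bilinearity of the commutator of vector fields:
\begin{equation*}
[\partial_{t^{*}_{l,k,\alpha}},\partial_{t_{\beta,n}}]L=\sum_{p,s\ge 0}\frac{l^{p}(k\log q)^{s}}{p!\,s!}\,[\partial_{t_{p,s,\alpha}},\partial_{t_{\beta,n}}]L=0,
\end{equation*}
each summand vanishing by the established commutativity of the polynomial additional flows, and likewise on $M$. Since a flow is a symmetry of the hierarchy precisely when it commutes with all $\partial_{t_{\beta,n}}$ on $L$ (equivalently on the dressing operator $P$ and the projectors $R_{\beta}$), this proves the theorem.

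The step I expect to be the genuine obstacle is justifying the interchange of the infinite double sum over $(p,s)$ with the projection $(\cdot)_{\le 0}$ and with the commutator $[\,\cdot\,,\partial_{t_{\beta,n}}]$. The clean way around it is to read every identity above as an identity of formal power series in the parameters $l$ and $k\log q$: for each fixed pair of exponents there is exactly one contributing generator $M^{p}L^{s}R_{\alpha}$, so all manipulations are coefficient-wise and no convergence is needed. If instead one insists on treating $l,k$ numerically and the series as a genuine pseudodifferential operator, one must control, in each fixed power of $\pa$, the contributions of arbitrarily large $(p,s)$ to $(M^{p}L^{s}R_{\alpha})_{\le 0}$; because the Orlov--Schulman operator $M$ carries terms of unbounded order, this order-counting is the delicate point that would have to be settled before the sum and bracket could be exchanged.
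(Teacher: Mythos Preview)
Your proposal is correct and follows essentially the same approach as the paper: expand $e^{lM}q^{kL}R_{\alpha}$ as the double exponential series in $M^{p}L^{s}R_{\alpha}$, recognize the resulting action on $L$ as the linear combination $\sum_{p,s}\frac{l^{p}(k\log q)^{s}}{p!\,s!}\partial_{t_{p,s,\alpha}}L$, and then use bilinearity of the commutator together with the already-established vanishing of $[\partial_{t_{p,s,\alpha}},\partial_{t_{\beta,n}}]$. Your added remark that the identities should be read coefficient-wise as formal power series in $l$ and $k\log q$ is a welcome clarification that the paper leaves implicit.
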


Now it is time to identify the algebraic structure of the quantum torus
additional $\partial_{t_{l,k,\alpha}^*}$ flows of the multicomponent mKP hierarchy in the following theorem.
\begin{theorem}
The additional flows $\partial_{t_{l,k,\alpha}^*}$ of the multicomponent mKP hierarchy form the following multi quantum torus type algebra, i.e.,
\begin{equation}
[\partial_{t^*_{n,m,\beta}},\partial_{t^*_{l,k,\alpha}}]=(q^{ml}-q^{nk})\delta_{\alpha,\beta}\partial_{t^*_{n+l,m+k,\alpha}},\ \ n,m,l,k\geq 0, \ 1\leq \alpha,\beta\leq N.
\end{equation}

\end{theorem}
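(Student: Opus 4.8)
The plan is to reduce the statement to a single operator computation, using the standard dictionary between an additional flow and its generating operator. First I would observe that, by the definition of the starred flows, $\partial_{t^*_{n,m,\beta}}$ is exactly the flow $\partial_U$ with generator $U=e^{nM}q^{mL}R_{\beta}$ in the sense that $\partial_U P=-(U)_{\leq 0}P$, and likewise $\partial_{t^*_{l,k,\alpha}}=\partial_V$ with $V=e^{lM}q^{kL}R_{\alpha}$ (here $e^{nM}q^{mL}$ is understood, as elsewhere in the paper, as the double series $\sum_{p,s}\frac{n^p(m\log q)^s}{p!s!}M^pL^s$). Since $L$, $M=P\Gamma P^{-1}$ and $R_{\alpha}=PE_{\alpha}P^{-1}$ are all obtained by dressing with $P$ a core that carries no dependence on the starred times, any such dressed operator $X$ obeys $\partial_U X=-[(U)_{\leq 0},X]$; in particular $\partial_U V=-[(U)_{\leq 0},V]$, and because the projection $(\cdot)_{\leq 0}$ commutes with the flow derivative one has $\partial_U(V)_{\leq 0}=(\partial_U V)_{\leq 0}$.

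With this in hand I would establish the general commutator identity
\[
[\partial_U,\partial_V]P=\big([U,V]\big)_{\leq 0}\,P,
\]
valid for any two generators of this dressed form. This is a purely formal manipulation: expanding $\partial_U\partial_V P$ and $\partial_V\partial_U P$ with the rules above produces the projected commutators $([(U)_{\leq 0},V])_{\leq 0}-([(V)_{\leq 0},U])_{\leq 0}$ together with the product commutator $[(V)_{\leq 0},(U)_{\leq 0}]$. Splitting each of $U,V$ into its $\leq 0$ and $\geq 1$ parts, using that $[(U)_{\leq 0},(V)_{\leq 0}]$ already lies in the $\leq 0$ part and that $[(U)_{\geq 1},(V)_{\geq 1}]$ has no $\leq 0$ part, collapses the whole sum to $([U,V])_{\leq 0}$. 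Thus the entire problem is transferred to evaluating $[U,V]$ at the level of operators.

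The heart of the argument is then this operator commutator, where the three structural facts of the hierarchy enter: $[L,M]=1$, the commutativity $[L,R_{\gamma}]=[M,R_{\gamma}]=0$ together with the idempotent relation $R_{\alpha}R_{\beta}=\delta_{\alpha\beta}R_{\alpha}$, and the Baker--Campbell--Hausdorff consequence of $[L,M]=1$. Concretely, writing $q^{mL}=e^{(m\log q)L}$ and noting that $[(m\log q)L,\,lM]=ml\log q$ is central, I obtain the exchange rule $q^{mL}e^{lM}=q^{ml}\,e^{lM}q^{mL}$. Moving the idempotents $R_{\beta},R_{\alpha}$ to the right (which forces $\alpha=\beta$ and leaves a single $R_{\alpha}$) and applying this exchange rule to both orderings gives
\[
UV=\delta_{\alpha\beta}\,q^{ml}\,e^{(n+l)M}q^{(m+k)L}R_{\alpha},\qquad VU=\delta_{\alpha\beta}\,q^{nk}\,e^{(n+l)M}q^{(m+k)L}R_{\alpha},
\]
so that $[U,V]=\delta_{\alpha\beta}\,(q^{ml}-q^{nk})\,e^{(n+l)M}q^{(m+k)L}R_{\alpha}$. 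Recognizing $e^{(n+l)M}q^{(m+k)L}R_{\alpha}$ as precisely the generator of $\partial_{t^*_{n+l,m+k,\alpha}}$ and feeding $[U,V]$ back through the commutator identity closes the algebra on itself and yields the quantum torus relation.

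I expect the main obstacle to be careful bookkeeping rather than anything conceptual. Two points demand attention: verifying that $\Gamma$ (hence $M$) is genuinely inert under the starred flows, so that the clean rule $\partial_U X=-[(U)_{\leq 0},X]$ applies, and tracking the overall sign together with the ordering of $q^{ml}$ versus $q^{nk}$ through the projection and the BCH exchange so as to match the normalization in the stated identity. The asymmetric factor $q^{ml}-q^{nk}$ is a genuinely quantum effect arising solely from the non-commutation of $q^{mL}$ with $e^{lM}$, degenerating to the classical Block/torus bracket as $q\to 1$; the BCH step is therefore exactly where the structure constants are born and deserves the most care.
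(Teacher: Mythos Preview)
Your proposal is correct and follows essentially the same route as the paper: reduce the commutator of the starred flows to the projected operator commutator $([U,V])_{\leq 0}$ via the standard splitting argument (the paper does this acting on $L$ with the Jacobi identity, you do it acting on $P$, which is equivalent), and then evaluate $[e^{nM}q^{mL}R_{\beta},e^{lM}q^{kL}R_{\alpha}]$ using $[L,M]=1$ and the idempotent relations. Your explicit BCH exchange $q^{mL}e^{lM}=q^{ml}e^{lM}q^{mL}$ is exactly what underlies the paper's alternative argument via the model $e^{mM}q^{nL}R_{\alpha}\mapsto q^{nz}e^{m\partial_z}E_{\alpha}$, so the two derivations of the structure constants coincide; your caution about tracking the overall sign is well placed.
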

\begin{proof}
Using the Jacobi identity, we can derive the following computation which will finish the proof of this theorem

\begin{eqnarray*}
&&[\partial_{t^*_{n,m,\beta}},\partial_{t^*_{l,k,\alpha}}]L\\
&=&\partial_{t^*_{n,m}}([-(e^{lM}q^{kL}R_{\alpha})_{\leq 0},L])-\partial_{t^*_{l,k,\alpha}}([-(e^{nM}q^{mL}R_{\beta})_{\leq 0},L]) \\
&=&[-(\partial_{t^*_{n,m}} (e^{lM}q^{kL}R_{\alpha}))_{\leq 0},L] +[-(e^{lM}q^{kL}R_{\alpha})_{\leq 0},(\partial_{t^*_{n,m}} L)]
+[[-(e^{lM}q^{kL}R_{\alpha})_{\leq 0},e^{nM}q^{mL}R_{\beta}]_{\leq 0},L]\\
&&+[(e^{nM}q^{mL}R_{\beta})_{\leq 0},[-(e^{lM}q^{kL}R_{\alpha})_{\leq 0},L]] \\
&=&[[(e^{nM}q^{mL}R_{\beta})_{\leq 0},e^{lM}q^{kL}R_{\alpha}]_{\leq 0},L] +[(e^{lM}q^{kL}R_{\alpha})_{\leq 0},[(e^{nM}q^{mL}R_{\beta})_{\leq 0},L]]
\\
&&+[[-(e^{lM}q^{kL}R_{\alpha})_{\leq 0},e^{nM}q^{mL}R_{\beta}]_{\leq 0},L]+[(e^{nM}q^{mL}R_{\beta})_{\leq 0},[-(e^{lM}q^{kL}R_{\alpha})_{\leq 0},L]] \\
&=&[[(e^{nM}q^{mL}R_{\beta})_{\leq 0},e^{lM}q^{kL}R_{\alpha}]_{\leq 0},L] +[[(e^{lM}q^{kL}R_{\alpha})_{\leq 0},(e^{nM}q^{mL}R_{\beta})_{\leq 0}],L]
\\
&&+[[-(e^{lM}q^{kL}R_{\alpha})_{\leq 0},e^{nM}q^{mL}R_{\beta}]_{\leq 0},L]\\
&=&[[e^{nM}q^{mL}R_{\beta},e^{lM}q^{kL}R_{\alpha}]_{\leq 0},L]\\
&=&-(q^{ml}-q^{nk})[(\delta_{\alpha,\beta}e^{(n+l)M}q^{(m+k)L}R_{\alpha})_{\leq 0},L]\\
&=&(q^{ml}-q^{nk})\delta_{\alpha,\beta}\partial_{t^*_{n+l,m+k,\alpha}}L.
\end{eqnarray*}

One can also prove this theorem as following in another way by considering the functions' set $\{e^{mM}q^{nL}R_{\alpha},\ m,n\geq 0, \ 1\leq \alpha\leq N\}$ has an isomorphism with the functions' set $\{q^{nz}e^{m\partial_z}R_{\beta},\ m,n\geq 0, \ 1\leq \beta\leq N\}$ as
\begin{equation}
e^{mM}q^{nL}R_{\alpha} \qquad \mapsto\qquad  q^{nz}e^{m\partial_z}E_{\alpha},
\end{equation}
with the following commutator
\begin{equation}
[q^{nz}e^{m\partial_z}E_{\alpha},q^{lz}e^{k\partial_z}E_{\beta}]=\delta_{\alpha,\beta}(q^{ml}-q^{nk})q^{(n+l)z}e^{(m+k)\partial_z}.
\end{equation}

\end{proof}

\section{The constrained multi-component mKP hierarchy}

Similar as the constrained multicomponent KP hierarchy in \cite{zhangJPA}, the following reduction condition can also be imposed onto the $N$-component mKP hierarchy:
\[\label{constr}\sum_{l=1}^rd_l^s(LR_l)^s+\sum_{l=r+1}^NLR_l= \sum_{l=1}^rd_l^s(LR_l)_{\geq 1}^s+\sum_{l=r+1}^N(LR_l)_{\geq 1},\ \ 1\leq r\leq N.\]

Let us denote following operators as
\[\hat \d:=\d_1+\d_1+\dots+\d_r,\ \ \d_i=\frac{\d}{\d t_{i,1}},\]
\[\hat L^{(i)}=P^{(r)}(\hat \d)E_{i}^{(r)}\hat \d (P^{(r)}(\hat \d))^{-1},\ \
\hat L=\sum_{i=1}^rd_i\hat L^{(i)},\]
and
\[B_k^{(\beta,r)}=(P^{(r)}(\hat \d)E_\beta^{(r)}\hat \d^k (P^{(r)}(\hat \d))^{-1})_{\geq 1}:=(C_k^{(\beta,r)})_{\geq 1},\ \ 1\leq \beta\leq r,\]
\[C_k^{(\beta,r)}=P^{(r)}(\hat \d)E_\beta^{(r)}\hat \d^k (P^{(r)}(\hat \d))^{-1},\ \ 1\leq \beta\leq r,\]
\[\phi^{(r)}=(P_1)_{ij}|_{1\leq i\leq r,r+1\leq j\leq N},\ \ \psi^{(r)}=(P_1)_{ij}|_{1\leq j\leq r,r+1\leq i\leq N},\]
where $P^{(r)}(E_{i}^{(r)})$ is the $r\times r$ principal sub-matrix of $P(E_{i})$.
This further lead to
\[\L=\hat L^s=\sum_{l=1}^rd_l^sB_s^{(l,r)}+\phi^{(r)}\hat \d^{-1}\psi^{(r)}\hat \d.\]
Under the constraint eq.\eqref{constr}, the following evolutionary equations of the constrained $N$-component mKP hierarchy can be derived

\[\frac{\d \phi^{(r)}}{\d t_{n,\beta}}=B_n^{(\beta,r)}\phi^{(r)},\ \ \frac{\d \psi^{(r)}}{\d t_{n,\beta}}=-(B_n^{(\beta,r)})^*\psi^{(r)},\]
\[\frac{\d \hat L^s}{\d t_{n,\beta}}=[B_n^{(\beta,r)},\hat L^s],\]
where for $B_n^{(\beta,r)}=\sum_{i=1}^nB_i\hat \d^i,$
\[(B_n^{(\beta,r)})^*\psi^{(r)}:=\sum_{i=1}^n(-1)^i\hat \d^i(\psi^{(r)}B_i).\]
When $r=1,N=2$, one can obtain a $s$-constrained mKP hierarchy. When  $r=1,N\geq 2$, one can obtain the vector $s$-constrained mKP hierarchy.

\section{Virasoro symmetry of the  constrained multi-component mKP hierarchy}
In this section, we shall construct the additional symmetry and discuss the algebraic structure of the
additional symmetry  flows of the $N$-component  constrained mKP hierarchy.

To this end, firstly we define $\hat\Gamma^{(r)}$ and the Orlov-Shulman's  operator $\M$
 \begin{equation}
\hat\Gamma^{(r)}=\sum_{\beta=1}^rt_{1\beta}\frac{E_{\beta}}{sd_{\beta}^s}\hat \d^{1-s}+\sum_{\beta=1}^r\sum_{n=2}^{\infty}\frac nsd_{\beta}^{-s}E_{\beta}\hat \d^{n-s}t_{n,\beta},\ \ \M= P^{(r)} \hat\Gamma^{(r)} (P^{(r)})^{-1}.
\end{equation}

It  is easy to find the following formula
\begin{equation}\label{commute}
[\partial_{t_{n,\beta}}-\hat \d^nE_{\beta},\hat\Gamma^{(r)}]=0.
\end{equation}

There are the following commutation relations
\begin{equation}
[\sum_{\beta=1}^rd_{\beta}^s\hat \d^s E_{\beta},\hat\Gamma^{(r)}]=E,
\end{equation}
which can be verified by a straightforward calculation.
By using the Sato equation, the isospectral flow of the $\M$ operator
is given by
 \begin{equation}
\partial_{t_{n,\beta}}\M=[B_n^{(\beta,r)},\M].
\end{equation}  More generally,
\begin{equation}
\partial_{t_{n,\beta}}(\M^m\L^l)=[B_n^{(\beta,r)},\M^m\L^l].
\end{equation}

The Lax operator $\L$ and the Orlov-Shulman's $\M$ operator satisfy the following canonical relation
\[[\L,\M]=E.\] Then the additional flows for the time variable $t_{m,n,\beta}$ will be
defined as follows
\begin{equation}
\dfrac{\partial S}{\partial t_{m,n,\beta}}=-(\M^mC_n^{(\beta,r)})_{\leq 0}S,\ \ m,n \in \N, 1\leq \beta\leq r,
\end{equation}
which is equivalent to
\begin{equation}
\dfrac{\partial \L}{\partial t_{m,n,\beta}}=-[(\M^mC_n^{(\beta,r)})_{\leq 0},\L], \qquad
\dfrac{\partial \M}{\partial t_{m,n,\beta}}=-[(\M^mC_n^{(\beta,r)})_{\leq 0},\M].
\end{equation}
Later we can prove the additional flows $\dfrac{\partial }{\partial
t_{m,n,\beta}}$  commute with the flow $\dfrac{\partial
}{\partial t_{k,\gamma}}$ and they form a kind of $W_{\infty}$ infinite dimensional  additional Lie algebra which contain Virasoro algebra as a
subalgebra. To this purpose, we need several lemmas and propositions as preparation firstly.

For above local differential operators $B_n^{(\beta,r)}$, we have the following lemma.
\begin{lemma}
  \label{lemm} $[B_n^{(\beta,r)},\phi^{(r)}\hat \d^{-1}\psi^{(r)}\hat \d]_{\leq 0}=B_n^{(\beta,r)}(\phi^{(r)})\hat \d^{-1}\psi^{(r)}\hat \d-\phi^{(r)}\hat \d^{-1}(\hat \d^{-1}B_n^{(\beta,r)*}\hat \d)(\psi^{(r)})\hat \d$.
\end{lemma}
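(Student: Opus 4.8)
The plan is to expand the commutator as
$[B,\,\phi^{(r)}\hat \d^{-1}\psi^{(r)}\hat \d]
=B\,\phi^{(r)}\hat \d^{-1}\psi^{(r)}\hat \d
-\phi^{(r)}\hat \d^{-1}\psi^{(r)}\hat \d\,B$
(writing $B:=B_n^{(\beta,r)}$) and to project each of the two pieces onto its order $\leq 0$ part separately. The computation rests on three elementary facts about pseudo-differential operators, valid for any purely differential $A=\sum_{i\geq 0}a_i\hat \d^i$ and functions $f,g$: first, the peeling identity $(Q\hat \d)_{\leq 0}=(Q_{<0})\hat \d$ for arbitrary $Q$; second, $(A f\hat \d^{-1})_{<0}=A(f)\hat \d^{-1}$, where $A(f)$ denotes the function obtained by letting $A$ act on $f$; and third, its adjoint counterpart $(\hat \d^{-1}g\,A)_{<0}=\hat \d^{-1}A^{*}(g)$. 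The first two follow from a one-line Leibniz expansion, in which only the top term $k=i$ survives in the \emph{strictly} negative part. The decisive feature that makes everything close up is that the relevant projections are onto strictly negative order $(\cdot)_{<0}$, not $(\cdot)_{\leq 0}$, which is exactly what the peeling identity delivers.

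First I would treat $B\,\phi^{(r)}\hat \d^{-1}\psi^{(r)}\hat \d=(B\,\phi^{(r)}\hat \d^{-1}\psi^{(r)})\hat \d$. By the peeling identity its $\leq 0$ part equals $(B\,\phi^{(r)}\hat \d^{-1}\psi^{(r)})_{<0}\,\hat \d$. Since $(B\,\phi^{(r)}\hat \d^{-1})_{<0}=B(\phi^{(r)})\hat \d^{-1}$, while the non-negative remainder of $B\,\phi^{(r)}\hat \d^{-1}$ is differential and hence contributes nothing negative after right multiplication by the function $\psi^{(r)}$, one obtains $B(\phi^{(r)})\hat \d^{-1}\psi^{(r)}\hat \d$, which is precisely the first term on the right-hand side.

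Next comes the term with $B$ on the right, $\phi^{(r)}\hat \d^{-1}\psi^{(r)}\hat \d\,B$; here the key preliminary is to recast the trailing $\hat \d$ into the form $(\cdots)\hat \d$ so that peeling applies. Because $B=(C_n^{(\beta,r)})_{\geq 1}$ carries \emph{no} zeroth-order term, the operator $\hat B:=\hat \d B\hat \d^{-1}$ is again purely differential, so $\hat \d B=\hat B\hat \d$ and $\phi^{(r)}\hat \d^{-1}\psi^{(r)}\hat \d\,B=(\phi^{(r)}\hat \d^{-1}\psi^{(r)}\hat B)\hat \d$. Projecting, using that left multiplication by the order-zero factor $\phi^{(r)}$ commutes with the order projection, and then applying $(\hat \d^{-1}g A)_{<0}=\hat \d^{-1}A^{*}(g)$ with $g=\psi^{(r)}$ and $A=\hat B$, yields $\phi^{(r)}\hat \d^{-1}\hat B^{*}(\psi^{(r)})\hat \d$. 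Finally the adjoint is unwound by the formal rules $\hat \d^{*}=-\hat \d$, $(\hat \d^{-1})^{*}=-\hat \d^{-1}$ and $(XY)^{*}=Y^{*}X^{*}$, which give $\hat B^{*}=(\hat \d B\hat \d^{-1})^{*}=\hat \d^{-1}B^{*}\hat \d$, so this piece is exactly $\phi^{(r)}\hat \d^{-1}(\hat \d^{-1}B_n^{(\beta,r)*}\hat \d)(\psi^{(r)})\hat \d$. Subtracting the two projected pieces gives the claimed identity.

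I expect the main obstacle to be bookkeeping rather than anything conceptual: one must insist that the governing projection is onto strictly negative order, since the naive replacement of $(\hat \d^{-1}R)_{\leq 0}$ by $\hat \d^{-1}R_{\leq 1}$ is false — every positive-order term of $R$ feeds into the order-zero coefficient through the Leibniz tail of $\hat \d^{-1}$, and it is precisely the peeling step that converts the $\leq 0$ projection into a clean $<0$ one. A secondary point demanding care is the non-commutativity of the matrix coefficients: the Leibniz expansions must keep $B$, $\phi^{(r)}$ and $\psi^{(r)}$ in their correct left-to-right order, and the matrix adjoint convention (together with $\hat \d^{*}=-\hat \d$) must be applied consistently, exactly as in the scalar mKP and multicomponent KP computations of \cite{jipengmkp,orlov}.
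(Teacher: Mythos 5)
Your proof is correct and takes essentially the same route as the paper's: both come down to the Leibniz expansion giving $(B\phi^{(r)}\hat \d^{-1})_{<0}=B(\phi^{(r)})\hat \d^{-1}$ for the left term and the iterated integration-by-parts identity $(\hat \d^{-1}g\,A)_{<0}=\hat \d^{-1}A^{*}(g)$ for the right term, the paper carrying this out monomial by monomial for $\hat \d^n$ while you state the projection identities for general differential $B$ and conjugate by $\hat \d$ to absorb the trailing derivative. The only caveat is that $A^{*}(g)$ must be read with the paper's matrix convention $(\sum_i B_i\hat \d^{i})^{*}\psi^{(r)}=\sum_i(-1)^{i}\hat \d^{i}(\psi^{(r)}B_i)$, a point you already flag.
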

\begin{proof}
 Firstly we consider a fundamental monomial: $\hat \d^n$ ($n\ge 1$). Then
  \begin{displaymath}
    [\hat \d^n,\phi^{(r)}\hat \d^{-1}\psi^{(r)}\hat \d]_{\leq 0}=(\hat \d^n(\phi^{(r)}))\hat \d^{-1}\psi^{(r)}\hat \d- (\phi^{(r)}\hat \d^{-1}\psi^{(r)}\hat \d \hat \d^n)_{\leq 0}.
  \end{displaymath}
  Notice that the second term can be rewritten in the following way
  \begin{eqnarray*}
    (\phi^{(r)}\hat \d^{-1}\psi^{(r)}\hat \d \hat \d^{n})_{\leq 0}
    =(\phi^{(r)}\psi^{(r)} \hat \d^{n}
        -\phi^{(r)}\hat \d^{-1}(\hat \d \psi^{(r)})\hat \d^{n})_{\leq 0}=(\phi^{(r)}\hat \d^{-1}(-\hat \d \psi^{(r)})\hat \d^{n})_{\leq 0} \\
    =\cdots
    =\phi^{(r)}\hat \d^{-1}\left((-\hat \d)^n
    (\psi^{(r)})\right)\hat \d=\phi^{(r)}\hat \d^{-1}(\hat \d\hat \d^n\hat \d^{-1})^*(\psi^{(r)})\hat \d,
  \end{eqnarray*}
  then the lemma is proved.
\end{proof}

We can get some properties of the Lax operator in the following proposition.

\begin{lemma}The Lax operator $\L$ of the constrained $N$-component mKP hierarchy  will satisfy the relation of
\begin{equation}\label{Lk}
(\L^k)_{\leq 0}=\sum_{j=0}^{k-1}\L^{k-j-1}(\phi^{(r)})\hat \d^{-1}{(\hat \d\L^j\hat \d^{-1})}^*(\psi^{(r)})\hat \d,\ \ k\in \Z,
\end{equation}
where $\L(\phi):=(\L)_{\geq 0}(\phi) + \phi^{(r)}\hat \d^{-1}(\psi^{(r)}\hat \d\phi),$  for arbitrary $r\times r$ matrix function $\phi$.
\end{lemma}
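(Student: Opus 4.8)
The plan is to prove the identity by induction on $k$, establishing it first for $k\geq 1$ and then reducing the remaining $k\in\Z$ to this case. Throughout I write $\L=\L_{\geq 1}+Q$, where $\L_{\geq 1}=\sum_{l=1}^r d_l^s B_s^{(l,r)}$ is the purely differential part (of order $\geq 1$, with no zeroth-order term) and $Q=\phi^{(r)}\hat\d^{-1}\psi^{(r)}\hat\d$ is the $\leq 0$ tail forced by the reduction constraint; I denote by $R_k$ the right-hand side $\sum_{j=0}^{k-1}\L^{k-j-1}(\phi^{(r)})\hat\d^{-1}(\hat\d\L^j\hat\d^{-1})^*(\psi^{(r)})\hat\d$ of the claimed formula. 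The base case $k=1$ is immediate: $(\L)_{\leq 0}=Q$, which is exactly $R_1$ once one observes that $\L^0$ acts as the identity and $(\hat\d\L^0\hat\d^{-1})^*=E$.

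For the inductive step I would use $\L^{k+1}=\L\cdot\L^k$ and split the product as
\begin{equation*}
(\L^{k+1})_{\leq 0}=(\L_{\geq 1}\L^k)_{\leq 0}+(Q\,\L^k)_{\leq 0}.
\end{equation*}
In the first term the product $\L_{\geq 1}(\L^k)_{\geq 1}$ of two differential operators of order $\geq 1$ has no $\leq 0$ part, so $(\L_{\geq 1}\L^k)_{\leq 0}=\big(\L_{\geq 1}(\L^k)_{\leq 0}\big)_{\leq 0}=(\L_{\geq 1}R_k)_{\leq 0}$ after inserting the inductive hypothesis. I would then write $\L_{\geq 1}R_k=[\L_{\geq 1},R_k]+R_k\L_{\geq 1}$ and apply Lemma~\ref{lemm} block-by-block to the rank-one pieces of $R_k$; this is precisely the tool that turns $[\L_{\geq 1},\cdot]_{\leq 0}$ into one contribution raising the left function through $\L_{\geq 1}(\cdot)$ and one raising the right function through the conjugated adjoint $(\hat\d\L_{\geq 1}\hat\d^{-1})^*$. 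The second term $(Q\L^k)_{\leq 0}$ is treated by commuting $\L^k$ through the single $\hat\d^{-1}$ in $Q$ and using the natural action $\L(\cdot)$ of the statement together with its adjoint partner on the functions $\phi^{(r)},\psi^{(r)}$; this is where the identity $f\hat\d^{-1}=\sum_{i\geq 1}\hat\d^{-i}f^{(i-1)}$ and the operator relation $(\hat\d\L^j\hat\d^{-1})^*=\hat\d^{-1}(\L^*)^j\hat\d$ enter.

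The heart of the proof, and the step I expect to be the main obstacle, is the recombination of all these pieces into $R_{k+1}$. Because $\L$ is not purely differential, the splitting produces several residual cross terms — the piece $R_k\L_{\geq 1}$, the zeroth-order contribution of $Q$, and the interaction of $Q$ with $(\L^k)_{\geq 1}$ — and one must verify that all of the left-raising contributions assemble into $\L^{k-j}(\phi^{(r)})$ and all of the right-raising contributions assemble into $(\hat\d\L^j\hat\d^{-1})^*(\psi^{(r)})$, the boundary ($j=0$ and $j=k$) terms being supplied by $Q$ itself and the intermediate cross terms cancelling pairwise. After this cancellation and a shift of the summation index the sum runs over $j=0,\dots,k$ and reproduces $R_{k+1}$ exactly. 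Carrying out this bookkeeping correctly — keeping track of which adjoint and conjugation acts on $\psi^{(r)}$ at each stage — is the delicate and error-prone part of the argument.

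Finally, for $k\leq 0$ I would run the same telescoping downward, starting from the explicit form of $(\L^{-1})_{\leq 0}$ and using $\L^{k}\L^{-k}=E$, which yields the identity with the standard sign convention $\sum_{j=0}^{k-1}=-\sum_{j=k}^{-1}$ for negative $k$. I would note that $k=0$ is degenerate, since the sum is empty while $(\L^0)_{\leq 0}=E$, so the clean statement is genuinely the one for $k\geq 1$, with the negative powers following by the analogous computation.
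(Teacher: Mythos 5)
The paper states this lemma without any proof, so there is no argument of the authors' to measure yours against; I can only judge the proposal on its own terms. Your plan — induction on $k$ from the base case $(\L)_{\leq 0}=\phi^{(r)}\hat \d^{-1}\psi^{(r)}\hat \d$, splitting $\L=\L_{\geq 1}+Q$ with $Q=\phi^{(r)}\hat \d^{-1}\psi^{(r)}\hat \d$, and invoking Lemma~\ref{lemm} together with $f\hat \d^{-1}=\sum_{i\geq 1}\hat \d^{-i}f^{(i-1)}$ — is the natural one, and the observations that $(\L_{\geq 1}(\L^k)_{\geq 1})_{\leq 0}=0$ and that the base case holds are correct. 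But what you have written is an outline, not a proof: the entire content of the lemma lives in the recombination step that you yourself label ``the heart of the proof'' and ``the delicate and error-prone part,'' and you do not carry it out. Concretely, one must verify that $(\L_{\geq 1}R_k)_{\leq 0}+(Q(\L^k)_{\geq 1})_{\leq 0}+QR_k$ equals $R_{k+1}$, and the subtle point is that the action $\L(\cdot)$ appearing in \eqref{Lk} is \emph{not} the naive application of the operator $\L$ to a function: with $\L(f)=(\L)_{\geq 0}(f)+\phi^{(r)}\hat \d^{-1}(\psi^{(r)}\hat \d f)$ and $(\L)_{\geq 0}=\L_{\geq 1}+\phi^{(r)}\psi^{(r)}$, the order-zero part of $Q$ is counted differently than in the operator product. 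Likewise $Q\cdot a\hat \d^{-1}b\hat \d$ produces, besides $Q(a)\hat \d^{-1}b\hat \d$, a residual term of the form $\phi^{(r)}\hat \d^{-1}\bigl(\int a\,\psi^{(r)}_x\bigr)b\,\hat \d$ which must be recognized as the $(\hat \d Q\hat \d^{-1})^*$-contribution to $(\hat \d\L^{j+1}\hat \d^{-1})^*(\psi^{(r)})$. Until this bookkeeping is actually performed, ``the intermediate cross terms cancel pairwise'' is an assertion, not an argument, and it is precisely the assertion the lemma is asking you to prove.

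The treatment of $k\leq 0$ is also not salvageable as proposed. For $k=-1$ your sign convention turns the right-hand side of \eqref{Lk} into the single term $-\L^{-1}(\phi^{(r)})\hat \d^{-1}(\hat \d\L^{-1}\hat \d^{-1})^*(\psi^{(r)})\hat \d$, which is an operator of the special form $f\hat \d^{-1}g\hat \d$ (of ``rank'' bounded by that of $\phi^{(r)}$), whereas $(\L^{-1})_{\leq 0}=\L^{-1}$ is a generic pseudodifferential operator of order $-s$ with invertible leading coefficient; these cannot coincide in general. The identity is genuinely false for $k\leq 0$ (you already noted $k=0$ fails), so the ``$k\in \Z$'' in the statement must be read as $k\geq 1$; an honest proof should say this explicitly rather than promise an ``analogous computation'' for negative powers that does not exist.
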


The action of original
additional flows of the  constrained $N$-component mKP hierarchy  is  expressed by
\begin{equation}\label{additional}
(\partial_{1,k,\beta} \L)_{\leq 0}=[(\M C_k^{(\beta,r)})_{\geq 1},\L]_{\leq 0}+(C_k^{(\beta,r)})_{\leq 0}.
\end{equation}

To keep the consistency with flow equations on eigenfunction and adjoint eigenfunction $\phi^{(r)},\psi^{(r)}$, we shall introduce an operator
$F_k^{(\alpha,r)}$ as following to modify the additional symmetry
of the constrained multi-component mKP hierarchy.

We now introduce a pseudo  differential operator  $F_k^{(\alpha,r)}$,
\begin{eqnarray}
F_k^{(\alpha,r)}&=&0,k=-1,0,1,\label{y12}\\
F_k^{(\alpha,r)}&=&\sum_{j=0}^{k-1}[j-\frac{1}{2}(k-1)]C_{k-1-j}^{(\alpha,r)}(\phi^{(r)})
\hat \d^{-1}(\hat \d C_{j}^{(\alpha,r)}\hat \d^{-1})^*(\psi^{(r)})\hat \d,k\geq 2.\label{y}
\end{eqnarray}

The following
lemmas are
necessary to concern the Virasoro symmetry.
\begin{lemma}If $
X=M\hat \d^{-1}N\hat \d,$ then
\begin{equation}\label{operatorXL}
[X,\L]_{\leq 0}=[M\hat \d^{-1}(\hat \d\L\hat \d^{-1})^*(N)\hat \d-\L(M)\hat \d^{-1}N\hat \d]+ [X(\phi^{(r)})\hat \d^{-1}\psi^{(r)}\hat \d-\phi^{(r)}\hat \d^{-1}(\hat \d X\hat \d^{-1})^*(\psi^{(r)})\hat \d].
\end{equation}
\end{lemma}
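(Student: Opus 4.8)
The plan is to reduce the statement to the already-established Lemma~\ref{lemm} by splitting the Lax operator into its differential and ``tail'' parts, and then to treat the remaining tail--tail commutator by a direct pseudo-differential computation. Concretely, I would write $\L=\L_{\geq 1}+T$, where $\L_{\geq 1}=(\L)_{\geq 1}=\sum_{l=1}^r d_l^s B_s^{(l,r)}$ is the purely differential part and $T=\phi^{(r)}\hat\d^{-1}\psi^{(r)}\hat\d=(\L)_{\leq 0}$ is the order-zero tail (as recorded in \eqref{Lk}). Since the commutator and the projection $(\cdot)_{\leq 0}$ are linear, $[X,\L]_{\leq 0}=[X,\L_{\geq 1}]_{\leq 0}+[X,T]_{\leq 0}$, and I would handle the two pieces separately.

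For the differential piece I would observe that $X=M\hat\d^{-1}N\hat\d$ has exactly the shape of the tail appearing in Lemma~\ref{lemm}, whose proof uses only that the \emph{other} factor is differential (it is reduced to the monomials $\hat\d^n$). Applying that lemma with the differential operator $\L_{\geq 1}$ in the role of $B_n^{(\beta,r)}$ and the pair $(M,N)$ in the role of $(\phi^{(r)},\psi^{(r)})$, and using $\hat\d^{-1}B^*\hat\d=(\hat\d B\hat\d^{-1})^*$, yields
\[
[X,\L_{\geq 1}]_{\leq 0}=M\hat\d^{-1}(\hat\d\L_{\geq 1}\hat\d^{-1})^*(N)\hat\d-\L_{\geq 1}(M)\hat\d^{-1}N\hat\d .
\]

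The substantial part is the tail--tail commutator $[X,T]_{\leq 0}$. Here both factors have order $\leq 0$, so the commutator is automatically of order $\leq 0$ and nothing is lost to the projection. I would expand $XT$ and $TX$ by pushing each interior $\hat\d$ through the intervening matrix function via $\hat\d f=f\hat\d+f_x$, which produces from each product one ``local'' term of order $0$ and one nested double-integral term of the form $M\hat\d^{-1}(\cdots)\hat\d^{-1}(\cdots)\hat\d$. The target is to reorganise the four resulting terms into
\[
M\hat\d^{-1}(\hat\d T\hat\d^{-1})^*(N)\hat\d-T(M)\hat\d^{-1}N\hat\d+X(\phi^{(r)})\hat\d^{-1}\psi^{(r)}\hat\d-\phi^{(r)}\hat\d^{-1}(\hat\d X\hat\d^{-1})^*(\psi^{(r)})\hat\d ,
\]
using the rearrangement identity $f\hat\d^{-1}=\sum_{i\geq 1}\hat\d^{-i}f^{(i-1)}$ together with integration by parts to collapse the nested-integral operators into the single function-slot actions $T(M)$, $(\hat\d T\hat\d^{-1})^*(N)$, $X(\phi^{(r)})$ and $(\hat\d X\hat\d^{-1})^*(\psi^{(r)})$.

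Finally I would add the two contributions. Using $\L=\L_{\geq 1}+T$, the additivity of the map $B\mapsto(\hat\d B\hat\d^{-1})^*$, and the definition $\L(\phi)=\L_{\geq 0}(\phi)+\phi^{(r)}\hat\d^{-1}(\psi^{(r)}\hat\d\phi)$ from \eqref{Lk}, the differential and tail contributions in the first two slots assemble into $M\hat\d^{-1}(\hat\d\L\hat\d^{-1})^*(N)\hat\d-\L(M)\hat\d^{-1}N\hat\d$, while the remaining terms are precisely the $X$-action terms, giving the claimed identity. The hard part is the reorganisation of $[X,T]_{\leq 0}$: unlike Lemma~\ref{lemm}, where one factor is differential and the commutator collapses at once, both factors here are pseudo-differential, so the naive expansion produces nested-integral operators that do not individually resemble any target term. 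Converting them cleanly requires careful bookkeeping with the rearrangement and integration-by-parts identities, and in particular one must track the order-zero ($\phi^{(r)}\psi^{(r)}$) contribution, which is exactly what the definition of $\L(M)$ is designed to absorb; the correct signs in the four-term answer then emerge from the antisymmetry of the commutator.
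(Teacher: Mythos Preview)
The paper states this lemma without proof, so there is nothing to compare your argument against directly. Your plan is the natural one and is essentially correct: split $\L=\L_{\geq 1}+T$ with $T=\phi^{(r)}\hat\d^{-1}\psi^{(r)}\hat\d$, invoke Lemma~\ref{lemm} (with the roles of $(\phi^{(r)},\psi^{(r)})$ played by $(M,N)$ and $B_n^{(\beta,r)}$ by $\L_{\geq 1}$) to dispatch the differential piece, and then handle the tail--tail commutator by direct expansion. Your identification $(\hat\d B\hat\d^{-1})^*=\hat\d^{-1}B^*\hat\d$ is exactly what links the formula in Lemma~\ref{lemm} to the form appearing in \eqref{operatorXL}.

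Two small points are worth tightening. First, the tail--tail step can be organised more symmetrically than a brute expansion: since $X$ and $T$ have the \emph{same} shape $f\hat\d^{-1}g\hat\d$, one can apply the identity of Lemma~\ref{lemm} twice in disguise, once viewing $XT$ as $X$ acting on the ``$\phi$--slot'' of $T$ (plus a remainder), and once viewing $TX$ the same way with the roles swapped; the four target terms then drop out without ever writing nested $\hat\d^{-1}$'s. Second, in the recombination step you cite the definition $\L(\phi)=(\L)_{\geq 0}(\phi)+\phi^{(r)}\hat\d^{-1}(\psi^{(r)}\hat\d\phi)$ from \eqref{Lk}; be aware that in the present splitting the contribution from $\L_{\geq 1}$ and from $T$ must assemble into $\L(M)$ without double-counting the order-zero coefficient $\phi^{(r)}\psi^{(r)}$, so check that your bookkeeping of the order-zero piece of $T$ is consistent with that definition. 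With those two clarifications your outline constitutes a complete proof.
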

\begin{lemma}
The action of flows $\partial_{t_{l,\beta}}$ of the  constrained $N$-component mKP hierarchy  on the
$F_k^{(\alpha,r)}$ is
\begin{equation}\label{ykderivat}
\partial_{t_{l,\beta}}
F_k^{(\alpha,r)}=[B_l^{(\beta,r)},F_k^{(\alpha,r)}]_{\leq 0}.
\end{equation}
\end{lemma}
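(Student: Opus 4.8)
The plan is to establish the identity summand by summand, exploiting the fact that every term of $F_k^{(\alpha,r)}$ in \eqref{y} has the shape $u\,\hat\d^{-1}\,v\,\hat\d$ with $u=C_{k-1-j}^{(\alpha,r)}(\phi^{(r)})$ and $v=(\hat\d C_j^{(\alpha,r)}\hat\d^{-1})^*(\psi^{(r)})$. The first thing I would record is the isospectral evolution of the building blocks $C_n^{(\alpha,r)}=P^{(r)}E_\alpha^{(r)}\hat\d^{\,n}(P^{(r)})^{-1}$: differentiating this by means of the Sato flow of $P^{(r)}$ and using the orthogonality relation $C_a^{(i,r)}C_b^{(j,r)}=\delta_{ij}C_{a+b}^{(i,r)}$ (which annihilates the cross term $[C_l^{(\beta,r)},C_n^{(\alpha,r)}]$), one obtains $\partial_{t_{l,\beta}}C_n^{(\alpha,r)}=[B_l^{(\beta,r)},C_n^{(\alpha,r)}]$.

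Next I would deduce the two flow facts that drive the proof. Since $\partial_{t_{l,\beta}}\phi^{(r)}=B_l^{(\beta,r)}\phi^{(r)}$ and $C_{k-1-j}^{(\alpha,r)}$ is isospectral, the Leibniz rule gives at once $\partial_{t_{l,\beta}}u=B_l^{(\beta,r)}(u)$, so $u$ behaves as an eigenfunction. For the adjoint factor, observe that $D_j:=\hat\d C_j^{(\alpha,r)}\hat\d^{-1}$ is isospectral with respect to the conjugated operator $\hat\d B_l^{(\beta,r)}\hat\d^{-1}$, because $\partial_{t_{l,\beta}}D_j=\hat\d[B_l^{(\beta,r)},C_j^{(\alpha,r)}]\hat\d^{-1}=[\hat\d B_l^{(\beta,r)}\hat\d^{-1},D_j]$; combined with the adjoint evolution of $\psi^{(r)}$ (in the conjugated form dictated by the modified-KP structure, cf.\ \eqref{phipsieq}), this yields $\partial_{t_{l,\beta}}v=-(\hat\d B_l^{(\beta,r)}\hat\d^{-1})^*(v)$. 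Finally, applying Lemma~\ref{lemm}, whose proof is valid verbatim for arbitrary $r\times r$ matrix functions in place of $\phi^{(r)},\psi^{(r)}$, to each summand gives $[B_l^{(\beta,r)},u\,\hat\d^{-1}v\,\hat\d]_{\leq 0}=B_l^{(\beta,r)}(u)\,\hat\d^{-1}v\,\hat\d-u\,\hat\d^{-1}(\hat\d B_l^{(\beta,r)}\hat\d^{-1})^*(v)\,\hat\d$. Writing out $\partial_{t_{l,\beta}}(u\,\hat\d^{-1}v\,\hat\d)$ by the Leibniz rule and inserting the two flow facts reproduces exactly this expression, so the equality holds term by term; summing against the weights $j-\tfrac12(k-1)$ and noting that $F_k^{(\alpha,r)}$ is already of order $\leq 0$ completes the argument.

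I expect the main obstacle to be the second flow fact, namely the control of the $\hat\d$-conjugation sitting inside $v=(\hat\d C_j^{(\alpha,r)}\hat\d^{-1})^*(\psi^{(r)})$. The subtle point is that the adjoint operator produced by Lemma~\ref{lemm} is the conjugated one $(\hat\d B_l^{(\beta,r)}\hat\d^{-1})^*=\hat\d^{-1}(B_l^{(\beta,r)})^*\hat\d$ rather than $(B_l^{(\beta,r)})^*$ itself, so the flow of $\psi^{(r)}$ must be matched to this conjugated operator; otherwise an anomalous term $\propto\hat\d^{-1}(C_j^{(\alpha,r)})^*[\hat\d,(B_l^{(\beta,r)})^*]\psi^{(r)}$ would survive. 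This is precisely the feature of the Kupershmidt--Kiso modified hierarchy coming from the $\geq 1$ truncation $B_l^{(\beta,r)}=(C_l^{(\beta,r)})_{\geq 1}$ and the extra $\hat\d$ in the constrained Lax operator $\L=\sum_{l=1}^r d_l^{\,s}B_s^{(l,r)}+\phi^{(r)}\hat\d^{-1}\psi^{(r)}\hat\d$; once the conjugation is bookkept consistently the anomaly cancels, and it is worth noting that the weights $j-\tfrac12(k-1)$ play no role whatsoever in this particular identity (they become essential only in the subsequent computation of the Virasoro/$W_\infty$ algebra).
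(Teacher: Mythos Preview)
Your proposal is correct and follows essentially the same approach as the paper: apply the Leibniz rule to each summand $u\,\hat\d^{-1}v\,\hat\d$ of $F_k^{(\alpha,r)}$, use the isospectral flow of $C_n^{(\alpha,r)}$ together with the eigenfunction/adjoint-eigenfunction evolutions of $\phi^{(r)}$ and $\psi^{(r)}$, and then recognise the resulting two-term expression as $[B_l^{(\beta,r)},u\,\hat\d^{-1}v\,\hat\d]_{\leq 0}$ via Lemma~\ref{lemm}. The paper's own proof is a very compressed version of exactly this computation; your write-up supplies more of the intermediate justifications (in particular the derivation of $\partial_{t_{l,\beta}}C_n^{(\alpha,r)}=[B_l^{(\beta,r)},C_n^{(\alpha,r)}]$ and the careful tracking of the $\hat\d$-conjugation on the adjoint side), and your remark that the weights $j-\tfrac12(k-1)$ are irrelevant here is correct.
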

\begin{proof}
\begin{eqnarray*}
\partial_{t_{l,\beta}}F_k^{(\alpha,r)}&=&\partial_{t_{l,\beta}}(\sum_{j=0}^{k-1}[j-\frac{1}{2}(k-1)]C_{k-1-j}^{(\alpha,r)}(\phi^{(r)})\hat \d^{-1}(\hat \d^{-1}C_{j}^{(\alpha,r)*}\hat \d)(\psi^{(r)})\hat \d)\\\nonumber
&=&\sum_{j=0}^{k-1}[j-\frac{1}{2}(k-1)]\{\partial_{t_{l,\beta}}(C_{k-1-j}^{(\alpha,r)}(\phi^{(r)}))\hat \d^{-1}(\hat \d^{-1}C_{j}^{(\alpha,r)*}\hat \d)(\psi^{(r)})\hat \d\\
&&+C_{k-1-j}^{(\alpha,r)}(\phi^{(r)})\hat \d^{-1}\partial_{t_{l,\beta}}((\hat \d^{-1}C_{j}^{(\alpha,r)*}\hat \d)(\psi^{(r)}))\hat \d\}\\\nonumber
&=&[B_l^{(\beta,r)}\circ \sum_{j=0}^{k-1}[j-\frac{1}{2}(k-1)]C_{k-1-j}^{(\alpha,r)}(\phi^{(r)})\hat \d^{-1}(\hat \d^{-1}C_{j}^{(\alpha,r)*}\hat \d)(\psi^{(r)})\hat \d]_{\leq 0}\\
&&-[(\sum_{j=0}^{k-1}[j-\frac{1}{2}(k-1)]C_{k-1-j}^{(\alpha,r)}(\phi^{(r)})\hat \d^{-1}(\hat \d^{-1}C_{j}^{(\alpha,r)*}\hat \d )(\psi^{(r)}))\circ B_l^{(\beta,r)}\hat \d]_{\leq 0}\\
&=&[B_l^{(\beta,r)},(\sum_{j=0}^{k-1}[j-\frac{1}{2}(k-1)]C_{k-1-j}^{(\alpha,r)}(\phi^{(r)})\hat \d^{-1}(\hat \d^{-1}C_{j}^{(\alpha,r)*}\hat \d)(\psi^{(r)}))\hat \d]_{\leq 0}\\
&=&[B_l^{(\beta,r)},F_k^{(\alpha,r)}]_{\leq 0}.
\end{eqnarray*}
\end{proof}

Further, the following expression of $[F_{k-1}^{(\beta,r)},\L]_{\leq 0}$ is also
necessary to define the additional flows of the  constrained $N$-component mKP hierarchy.
\begin{lemma}
The Lax operator $\L$ of the constrained $N$-component mKP hierarchy and $F_{k-1}^{(\beta,r)}$ has the following relation,
\begin{eqnarray}\notag
[F_{k-1}^{(\beta,r)},\L]_{\leq 0}&=&-(C_{k}^{(\beta,r)})_{\leq 0}+\frac{k}{2}[\phi^{(r)}\hat \d^{-1}(\hat \d C_{k-1}^{(\beta,r)}\hat \d^{-1})^*(\psi^{(r)})\hat \d+C_{k-1}^{(\beta,r)}(\phi^{(r)})\hat \d^{-1}\psi^{(r)}\hat \d\\
\label{ykl}
&&+F_{k-1}^{(\beta,r)}(\phi^{(r)})\hat \d^{-1}\psi^{(r)}\hat \d-\phi^{(r)}\hat \d^{-1}(\hat \d F_{k-1}^{(\beta,r)}\hat \d^{-1})^*(\psi^{(r)})\hat \d].
\end{eqnarray}
\end{lemma}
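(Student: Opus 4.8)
The plan is to exploit that, by its definition \eqref{y}, the operator $F_{k-1}^{(\beta,r)}$ is the finite linear combination
\[
F_{k-1}^{(\beta,r)}=\sum_{j=0}^{k-2}a_j\,X_j,\qquad a_j=j-\tfrac12(k-2),\qquad X_j=C_{k-2-j}^{(\beta,r)}(\phi^{(r)})\hat \d^{-1}\big(\hat \d C_j^{(\beta,r)}\hat \d^{-1}\big)^*(\psi^{(r)})\hat \d,
\]
of operators each of the shape $X=M\hat \d^{-1}N\hat \d$ treated in the identity \eqref{operatorXL}. First I would apply \eqref{operatorXL} to every $X_j$ and sum against the weights $a_j$, which splits $[F_{k-1}^{(\beta,r)},\L]_{\leq 0}$ into a boundary contribution coming from the second bracket of \eqref{operatorXL} and a bulk contribution coming from the first bracket.

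Since the second bracket of \eqref{operatorXL} is linear in $X$ and $\sum_{j}a_jX_j=F_{k-1}^{(\beta,r)}$, summing these contributions reproduces at once
\[
F_{k-1}^{(\beta,r)}(\phi^{(r)})\hat \d^{-1}\psi^{(r)}\hat \d-\phi^{(r)}\hat \d^{-1}\big(\hat \d F_{k-1}^{(\beta,r)}\hat \d^{-1}\big)^*(\psi^{(r)})\hat \d,
\]
which are exactly the two $F_{k-1}^{(\beta,r)}$-terms on the right-hand side of \eqref{ykl}; no genuine computation is required for this half.

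The substance of the argument is the bulk part. Writing $b_i:=C_{k-1-i}^{(\beta,r)}(\phi^{(r)})\hat \d^{-1}\big(\hat \d C_i^{(\beta,r)}\hat \d^{-1}\big)^*(\psi^{(r)})\hat \d$ for the building blocks that assemble $(C_k^{(\beta,r)})_{\leq 0}=\sum_{i=0}^{k-1}b_i$ through the analogue of \eqref{Lk}, the first bracket of \eqref{operatorXL} for the $j$-th term reduces, after using $\big(\hat \d\L\hat \d^{-1}\big)^*\big(\hat \d C_j^{(\beta,r)}\hat \d^{-1}\big)^*=\big(\hat \d C_j^{(\beta,r)}\L\hat \d^{-1}\big)^*$ and the commutativity of $\L$ with the $C_j^{(\beta,r)}$ (both being dressings through $P^{(r)}$), to the telescoping increment $a_j(b_{j+1}-b_j)$, once the index shift induced by $\L$ is read off. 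An Abel summation then gives
\[
\sum_{j=0}^{k-2}a_j\big(b_{j+1}-b_j\big)=-\sum_{i=1}^{k-2}b_i+a_{k-2}b_{k-1}-a_0b_0=-(C_k^{(\beta,r)})_{\leq 0}+\tfrac{k}{2}\,\big(b_0+b_{k-1}\big),
\]
where I used $(C_k^{(\beta,r)})_{\leq 0}=\sum_{i=0}^{k-1}b_i$ together with $a_{k-2}=\tfrac{k-2}{2}=-a_0$ and $1+\tfrac{k-2}{2}=\tfrac{k}{2}$. Finally $C_0^{(\beta,r)}=P^{(r)}E_\beta^{(r)}(P^{(r)})^{-1}$ acts as the identity on the constrained data, so $b_0=C_{k-1}^{(\beta,r)}(\phi^{(r)})\hat \d^{-1}\psi^{(r)}\hat \d$ and $b_{k-1}=\phi^{(r)}\hat \d^{-1}\big(\hat \d C_{k-1}^{(\beta,r)}\hat \d^{-1}\big)^*(\psi^{(r)})\hat \d$; these are precisely the two terms carrying the coefficient $\tfrac{k}{2}$ in \eqref{ykl}. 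Adding the boundary contribution of the previous step yields \eqref{ykl}.

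The step I expect to be the main obstacle is reading off the index shift in the bulk term, i.e. justifying rigorously that commuting with $\L$ sends the building blocks $b_i\mapsto b_{i\pm1}$. This requires controlling three things simultaneously: the adjoint pseudo-differential operator $\big(\hat \d\L\hat \d^{-1}\big)^*$ acting on the function $\big(\hat \d C_j^{(\beta,r)}\hat \d^{-1}\big)^*(\psi^{(r)})$, the dressed action of $\L$ hidden in $\L\big(C_{k-2-j}^{(\beta,r)}(\phi^{(r)})\big)$, and the orthogonal-idempotent relations obeyed by the $C_j^{(\beta,r)}$ through $E_\beta^{(r)}E_\gamma^{(r)}=\delta_{\beta\gamma}E_\beta^{(r)}$. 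Pinning down the two boundary identifications $b_0$ and $b_{k-1}$ (equivalently, that $C_0^{(\beta,r)}$ restricts to the identity on $\phi^{(r)}$ and $\psi^{(r)}$) is the other delicate point, since it is what fixes the exact coefficient $\tfrac{k}{2}$. I also record that, with this bookkeeping, the factor $\tfrac{k}{2}$ multiplies only $b_0+b_{k-1}$, while the two $F_{k-1}^{(\beta,r)}$-terms enter with coefficient $1$; thus the bracket in \eqref{ykl} is meant to close right after its second summand.
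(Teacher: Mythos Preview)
Your plan matches the paper's proof almost step for step: the paper also writes $F_{k-1}^{(\beta,r)}$ as a weighted sum of rank-one type operators, applies the identity \eqref{operatorXL} to each summand, separates the ``boundary'' contribution that immediately reassembles into the two $F_{k-1}^{(\beta,r)}$-terms, and then performs the index shift and telescoping on the remaining ``bulk'' part before invoking \eqref{Lk} to recognize $-(C_k^{(\beta,r)})_{\leq 0}$ and extract the two $\tfrac{k}{2}$-terms. Your Abel-summation phrasing is just a cleaner way to organize the same telescoping the paper carries out by hand, and your observation that the coefficient $\tfrac{k}{2}$ should attach only to the $b_0+b_{k-1}$ pair (with the two $F_{k-1}^{(\beta,r)}$-terms entering with coefficient~$1$) is exactly what the paper's computation produces as well; the bracket in the displayed statement is a grouping bracket, not a multiplication by $\tfrac{k}{2}$.
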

\begin{proof}
A direct calculation can lead to
\begin{eqnarray*}
[F_{k-1}^{(\beta,r)},\L]_{\leq 0}&=&[\sum^{k-2}_{j=0}[j-\frac{1}{2}(k-2)]C_{k-2-j}^{(\beta,r)}(\phi^{(r)})\hat \d^{-1}(\hat \d C_{j}^{(\beta,r)}\hat \d^{-1})^*(\psi^{(r)})\hat \d,\L]_{\leq 0}\\\nonumber
&=&\sum^{k-2}_{j=0}[j-\frac{1}{2}(k-2)]C_{k-2-j}^{(\beta,r)}(\phi^{(r)})\hat \d^{-1}(\hat \d C_{j+1}^{(\beta,r)}\hat \d^{-1})^*(\psi^{(r)})\hat \d\\\nonumber
&&-\sum^{k-2}_{j=0}[j-\frac{1}{2}(k-2)]C_{k-1-j}^{(\alpha,r)}(\phi^{(r)})\hat \d^{-1}(\hat \d C_{j}^{(\beta,r)}\hat \d^{-1})^*(\psi^{(r)})\hat \d\\\nonumber
&&+(F_{k-1}^{(\beta,r)}(\phi^{(r)})\hat \d^{-1}\psi^{(r)}\hat \d-\phi^{(r)}\hat \d^{-1}(\hat \d F_{k-1}^{(\beta,r)}\hat \d^{-1})^*(\psi^{(r)}))\hat \d\\\nonumber
&=&-\sum^{k-2}_{j=1}C_{k-1-j}^{(\beta,r)}(\phi^{(r)})\hat \d^{-1}(\hat \d C_{j}^{(\beta,r)}\hat \d^{-1})^*(\psi^{(r)})\hat \d\\\nonumber
&&+(\frac{k}{2}-1)[\phi^{(r)}\hat \d^{-1}(\hat \d C_{k-1}^{(\beta,r)}\hat \d^{-1})^*(\psi^{(r)})\hat \d+C_{k-1}^{(\beta,r)}(\phi^{(r)})\hat \d^{-1}\psi^{(r)}\hat \d]\\\nonumber
&&+(F_{k-1}^{(\beta,r)}(\phi^{(r)})\hat \d^{-1}\psi^{(r)}\hat \d-\phi^{(r)}\hat \d^{-1}(\hat \d F_{k-1}^{(\beta,r)}\hat \d^{-1})^*(\psi^{(r)}))\hat \d,
\end{eqnarray*}
which further help us deriving eq.\eqref{ykl} using eq.\eqref{Lk}.
\end{proof}

Putting together (\ref{additional}) and (\ref{ykl}), we define the
additional flows of the  constrained $N$-component mKP hierarchy  as
 \begin{equation}\label{tkflow}
\partial_{t_{1,k,\beta}}\L=[-(\M C_k^{(\beta,r)})_{\leq 0}+F_{k-1}^{(\beta,r)},\L],
\end{equation}
where
$F_{k-1}^{(\beta,r)}=0$, for $k=0,1,2$, such that the right-hand side of
(\ref{tkflow}) is in the form of derivation of Lax equations. Generally, one can also derive
\begin{equation}\label{MLK}
\partial_{t_{1,k,\beta}}(\M\L^l)=[-(\M C_k^{(\beta,r)})_{\leq 0}+F_{k-1}^{(\beta,r)},\M\L^l].
\end{equation}

Now we calculate the action of the additional flows eq.\eqref{tkflow}
on the eigenfunction $\phi^{(r)}$ and $\psi^{(r)}$ of the  constrained $N$-component mKP hierarchy.
\begin{theorem}\label{symmetre}
The acting of additional flows of  constrained $N$-component mKP hierarchy on the eigenfunction $\phi^{(r)}$ and $\psi^{(r)}$ are
\begin{equation}\label{BAfunction}
\begin{split}
{\partial_{t_{1,k,\beta}}\phi^{(r)}}&=(\M
 C_k^{(\beta,r)})_{\geq 1}(\phi^{(r)})+F_{k-1}^{(\beta,r)}(\phi^{(r)})+\frac{k}{2} C_{k-1}^{(\beta,r)}(\phi^{(r)}),\\
 {\partial_{t_{1,k,\beta}}\psi^{(r)}}&=-(\hat \d\M
 C_k^{(\beta,r)}\hat \d^{-1})^*_{\geq 1}(\psi^{(r)})-(\hat \d F_{k-1}^{(\beta,r)}\hat \d^{-1})^*(\psi^{(r)})+\frac{k}{2}{(\hat \d C_{k-1}^{(\beta,r)}\hat \d^{-1})^*} (\psi^{(r)}).
\end{split}
\end{equation}
\end{theorem}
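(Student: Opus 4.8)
The plan is to obtain the two flows by projecting the additional Lax flow \eqref{tkflow} onto its integral part. The reduction defining the hierarchy forces $(\L)_{\geq 1}=\sum_{l=1}^{r}d_l^sB_s^{(l,r)}$ and hence $(\L)_{\leq 0}=\phi^{(r)}\hat \d^{-1}\psi^{(r)}\hat \d$; since $\hat \d$ does not depend on $t_{1,k,\beta}$, applying $(\cdot)_{\leq 0}$ to \eqref{tkflow} and using the Leibniz rule gives
\begin{equation*}
\big(\partial_{t_{1,k,\beta}}\phi^{(r)}\big)\hat \d^{-1}\psi^{(r)}\hat \d+\phi^{(r)}\hat \d^{-1}\big(\partial_{t_{1,k,\beta}}\psi^{(r)}\big)\hat \d=\big[-(\M C_k^{(\beta,r)})_{\leq 0}+F_{k-1}^{(\beta,r)},\L\big]_{\leq 0}.
\end{equation*}
Everything then reduces to expanding the right-hand side into the two structurally distinct families $(\cdots)\hat \d^{-1}\psi^{(r)}\hat \d$ and $\phi^{(r)}\hat \d^{-1}(\cdots)\hat \d$: the coefficient in the first family is $\partial_{t_{1,k,\beta}}\phi^{(r)}$ and the operator acting on $\psi^{(r)}$ in the second is $\partial_{t_{1,k,\beta}}\psi^{(r)}$.

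For the $\M C_k^{(\beta,r)}$ summand I would peel off the factor $\M$ using the canonical relations. As $C_k^{(\beta,r)}$ and $\L$ are conjugates by $P^{(r)}$ of the commuting diagonal monomials $E_\beta^{(r)}\hat \d^{k}$ and $\sum_{l}d_l^sE_l^{(r)}\hat \d^{s}$, one has $[C_k^{(\beta,r)},\L]=0$, which with $[\L,\M]=E$ gives $[\M C_k^{(\beta,r)},\L]=-C_k^{(\beta,r)}$. Writing $-(\M C_k^{(\beta,r)})_{\leq 0}=-\M C_k^{(\beta,r)}+(\M C_k^{(\beta,r)})_{\geq 1}$ then yields
\begin{equation*}
\big[-(\M C_k^{(\beta,r)})_{\leq 0},\L\big]_{\leq 0}=(C_k^{(\beta,r)})_{\leq 0}+\big[(\M C_k^{(\beta,r)})_{\geq 1},\L\big]_{\leq 0},
\end{equation*}
and, $(\M C_k^{(\beta,r)})_{\geq 1}$ being a differential operator, Lemma \ref{lemm} (whose proof uses only that the operator is a polynomial in $\hat \d$, so it extends to any differential operator) converts $[(\M C_k^{(\beta,r)})_{\geq 1},\L]_{\leq 0}$ into a $\phi^{(r)}$-term $(\M C_k^{(\beta,r)})_{\geq 1}(\phi^{(r)})\hat \d^{-1}\psi^{(r)}\hat \d$ together with the matching adjoint $\psi^{(r)}$-term.

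For the $F_{k-1}^{(\beta,r)}$ summand I would use \eqref{ykl} to expand $[F_{k-1}^{(\beta,r)},\L]_{\leq 0}$. The decisive point is that the $-(C_k^{(\beta,r)})_{\leq 0}$ delivered by \eqref{ykl} cancels against the $+(C_k^{(\beta,r)})_{\leq 0}$ of the previous step as whole operators, so no bare $C_k^{(\beta,r)}$ survives and I never need to expand $(C_k^{(\beta,r)})_{\leq 0}$ explicitly; this is precisely why $F_{k-1}^{(\beta,r)}$ was introduced with its particular coefficients. Collecting the survivors, the $\hat \d^{-1}\psi^{(r)}\hat \d$-family reads $(\M C_k^{(\beta,r)})_{\geq 1}(\phi^{(r)})+F_{k-1}^{(\beta,r)}(\phi^{(r)})+\tfrac{k}{2}C_{k-1}^{(\beta,r)}(\phi^{(r)})$ and the $\phi^{(r)}\hat \d^{-1}(\cdots)\hat \d$-family is the corresponding adjoint expression on $\psi^{(r)}$, which is \eqref{BAfunction}. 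One may double-check the $\psi^{(r)}$-equation as the formal adjoint of the $\phi^{(r)}$-equation, reflecting the pairing of the eigenfunction and adjoint eigenfunction in $\L=(\L)_{\geq 1}+\phi^{(r)}\hat \d^{-1}\psi^{(r)}\hat \d$.

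The conceptual content is light and the effort is bookkeeping, where I expect two points to require care. First, the numerical coefficients: only the $C_{k-1}^{(\beta,r)}$ contributions acquire the factor $\tfrac{k}{2}$, which arises from the $j=0$ and $j=k-1$ boundary terms of \eqref{ykl} combining with its $\tfrac{k}{2}-1$ prefactor, while the $(\M C_k^{(\beta,r)})_{\geq 1}$ and $F_{k-1}^{(\beta,r)}$ contributions enter with coefficient one. Second, and this is the genuine obstacle, the conjugation $\hat \d(\cdot)\hat \d^{-1}$ moves order-zero terms across the $(\cdot)_{\geq 1}/(\cdot)_{\leq 0}$ cut, so matching the $\psi^{(r)}$-family against the stated $-(\hat \d\M C_k^{(\beta,r)}\hat \d^{-1})^*_{\geq 1}(\psi^{(r)})$ demands checking that the order-zero remainder is correctly absorbed; and the reading-off of $\partial_{t_{1,k,\beta}}\phi^{(r)}$ and $\partial_{t_{1,k,\beta}}\psi^{(r)}$ is legitimate only because the resulting pair of flows is pinned down by the requirement that it reproduce \eqref{tkflow}, i.e.\ by compatibility with the reduced form of $\L$.
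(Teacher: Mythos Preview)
Your proposal is correct and follows essentially the same route as the paper: project \eqref{tkflow} onto its $\leq 0$ part, expand the right-hand side using \eqref{additional} (which you re-derive inline via $[\M C_k^{(\beta,r)},\L]=-C_k^{(\beta,r)}$ and Lemma~\ref{lemm}) together with \eqref{ykl}, let the two $(C_k^{(\beta,r)})_{\leq 0}$ terms cancel, and then match the resulting expression against $(\partial_{t_{1,k,\beta}}\phi^{(r)})\hat\d^{-1}\psi^{(r)}\hat\d+\phi^{(r)}\hat\d^{-1}(\partial_{t_{1,k,\beta}}\psi^{(r)})\hat\d$. Your bookkeeping of the coefficients (the $\tfrac{k}{2}$ attaching only to the $C_{k-1}^{(\beta,r)}$ terms while the $(\M C_k^{(\beta,r)})_{\geq1}$ and $F_{k-1}^{(\beta,r)}$ terms enter with coefficient one) is exactly what the paper intends, and your remark about the $\hat\d(\cdot)\hat\d^{-1}$ conjugation at the order-zero cut is a fair caveat but does not change the argument.
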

\begin{proof}
Substitution of (\ref{ykl}) to negative part of (\ref{tkflow}) shows
\begin{eqnarray}\label{tkflow2}
\begin{split}
{(\partial_{t_{1,k,\beta}}\L)}_{\leq 0}&=(\M  C_k^{(\beta,r)})_{\geq 1}(\phi^{(r)})\hat \d^{-1}(\psi^{(r)})\hat \d-\phi^{(r)}\hat \d^{-1}(\hat \d \M  C_k^{(\beta,r)}\hat \d^{-1})_{\geq 1}^*(\psi^{(r)})\hat \d+F_{k-1}^{(\beta,r)}(\phi^{(r)})\hat \d^{-1}\psi^{(r)}\hat \d\\
&-\phi^{(r)}\hat \d^{-1}(\hat \d F_{k-1}^{(\beta,r)}\hat \d^{-1})^*(\psi^{(r)})\hat \d
+\frac{k}{2}\phi^{(r)}\hat \d^{-1}(\hat \d F_{k-1}^{(\beta,r)}\hat \d^{-1})^*(\psi^{(r)})\hat \d+\frac{k}{2}C_k^{(\beta,r)}(\phi^{(r)})\hat \d^{-1}\psi^{(r)}\hat \d.
\end{split}
\end{eqnarray}

On the other side,
\begin{equation}\label{tkl}
{(\partial_{t_{1,k,\beta}}\L)}_{\leq 0}=(\partial_{t_{1,k,\beta}}\phi^{(r)})\hat \d^{-1}\psi^{(r)}\hat \d+\phi^{(r)}\hat \d^{-1}{(\partial_{t_{1,k,\beta}}\psi^{(r)})}\hat \d.
\end{equation}
Comparing right hand sides of (\ref{tkflow2}) and (\ref{tkl})
implies  the action of additional flows on the eigenfunction and the
adjoint eigenfunction (\ref{BAfunction}).
\end{proof}

Next according the action of  $\partial_{t_{1,k,\beta}}$ and $\partial_{t_{l,\alpha}}$ on the
dressing operator $S$, then
\begin{eqnarray*}
[\partial_{t_{1,k,\beta}},\partial_{t_{l,\alpha}}]S &=& -\partial_{t_{1,k,\beta}}((C_{l}^{(\alpha,r)})_{\leq 0}
S)-\partial_{t_{l,\alpha}}[-(\M C_{k}^{(\beta,r)})_{\leq 0}+F_{k-1}^{(\beta,r)}]S \\
&=&(-\partial_{t_{1,k,\beta}}C_{l}^{(\alpha,r)})_{\leq 0}S-(C_{l}^{(\beta,r)})_{\leq 0}\partial_{t_{1,k,\beta}}S-[(\M
C_{k}^{(\beta,r)})_{\leq 0}-F_{k-1}^{(\beta,r)}](C_{l}^{(\alpha,r)})_{\leq 0}S \\
 &&+[(C_{l}^{(\alpha,r)})_{\geq 1},\M C_{k}^{(\beta,r)}]_{\leq 0}S-(\partial_{t_{l,\alpha}} F_{k-1}^{(\beta,r)})S\\
&=&[(C_{l}^{(\alpha,r)})_{\leq 0},-F_{k-1}^{(\beta,r)}]_{\leq 0}S+[-F_{k-1}^{(\beta,r)},C_{l}^{(\alpha,r)}]_{\leq 0}S-(\partial_{t_{l,\alpha}} F_{k-1}^{(\beta,r)})S \\
&=&[B_{l}^{(\alpha,r)},F_{k-1}^{(\beta,r)}]_{\leq 0}S-(\partial_{t_{1,\alpha}} F_{k-1}^{(\beta,r)})S\\
&=&0.
\end{eqnarray*}
Therefore the additional flows of $\partial_{t_{1,k,\beta}}$ are  symmetry flows of the  constrained $N$-component mKP hierarchy, i.e. they commute with all $\partial_{t_{l,\alpha}}$ flows of the   constrained $N$-component mKP hierarchy.

Taking into account $F_{k-1}^{(\beta,r)}=0$  for $k=0,1,2 $, then
eq.\eqref{BAfunction} becomes
\begin{eqnarray}\label{PLqr}
\begin{split}
\partial_{t_{1,l,\beta}}\phi^{(r)}&=(\M
C_{l}^{(\beta,r)})_{\geq 1}(\phi^{(r)})+\frac{1}{2}l C_{l-1}^{(\beta,r)} \phi^{(r)},\ \ l=0,1,2,\\
\partial_{t_{1,l,\beta}}\psi^{(r)}&=-(\hat \d\M
C_{l}^{(\beta,r)}\hat \d^{-1})^*_{\geq 1}(\psi^{(r)})+\frac{1}{2}l (\hat \d C_{l-1}^{(\beta,r)}\hat \d^{-1})^* \psi^{(r)}, \ \ l=0,1,2.
\end{split}
\end{eqnarray}

Then  using eq.\eqref{PLqr} and the
relation
${{\partial_{t_{1,l,\beta}}}(\L^k(\phi^{(r)}))}=({\partial_{t_{1,l,\beta}}}(\L^k))(\phi^{(r)})+\L^k
{\partial_{t_{1,l,\beta}}}(\phi^{(r)})$,
we can find
the additional  flows $\partial_{t_{1,l,\beta}}$ of  constrained $N$-component mKP hierarchy have the following relations
\begin{eqnarray}\label{lqstar}
\begin{split}
{\partial_{t_{1,l,\beta}} C_{k}^{(\alpha,r)}(\phi^{(r)})}&=(\M
C_{l}^{(\beta,r)})_{\geq 1}(C_{k}^{(\alpha,r)}(\phi^{(r)}))+(k+\frac{l}{2})C_{k+l-1}^{(\alpha,r)}\delta_{\alpha,\beta}(\phi^{(r)})+T_{l-1}^{(\beta,r)}C_{k}^{(\alpha,r)}(\phi^{(r)}),\\
{\partial_{t_{1,l,\beta}} (\hat \d C_{k}^{(\alpha,r)}\hat \d^{-1})^*(\psi^{(r)})}&=-(\hat \d \M
C_{l}^{(\beta,r)}\hat \d^{-1})^*_{\geq 1}(\hat \d C_{k}^{(\alpha,r)}\hat \d^{-1})^*(\psi^{(r)})+(k+\frac{l}{2})(\hat \d C_{k+l-1}^{(\alpha,r)}\hat \d^{-1})^*\delta_{\alpha,\beta}(\psi^{(r)})\\
&+(\hat \d T_{l-1}^{(\beta,r)}C_{k}^{(\alpha,r)}\hat \d^{-1})^*(\phi^{(r)}).
\end{split}
\end{eqnarray}

Moreover, the action of
 $\partial_{t_{1,l,\beta}}$ on $F_k^{(\alpha,r)}$ is given by the following lemma.
\begin{lemma}
The actions on $F_k^{(\alpha,r)}$ of the additional  symmetry flows
$\partial_{t_{1,l,\beta}}$ of the  constrained $N$-component mKP hierarchy are
\begin{eqnarray}\label{PL}
\partial_{t_{1,l,\beta}} F_k^{(\alpha,r)}=[(\M C_l^{\beta,r})_{\geq 1}+F_{l-1}^{(\beta,r)},F_k^{(\alpha,r)}]_{\leq 0}+(k-l+1)F_{k+l-1}^{(\alpha,r)}\delta_{\alpha,\beta}.
\end{eqnarray}
\end{lemma}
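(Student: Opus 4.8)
The plan is to differentiate the defining sum \eqref{y} of $F_k^{(\alpha,r)}$ termwise and to insert the flow actions \eqref{lqstar} already computed on the eigenfunction data. Abbreviating $a_m:=C_m^{(\alpha,r)}(\phi^{(r)})$ and $b_m:=(\hat \d C_m^{(\alpha,r)}\hat \d^{-1})^*(\psi^{(r)})$, so that $F_k^{(\alpha,r)}=\sum_{j=0}^{k-1}[j-\tfrac12(k-1)]\,a_{k-1-j}\hat \d^{-1}b_j\hat \d$ for $k\ge2$, the Leibniz rule gives
\begin{equation*}
\partial_{t_{1,l,\beta}}F_k^{(\alpha,r)}=\sum_{j=0}^{k-1}\big[j-\tfrac12(k-1)\big]\Big((\partial_{t_{1,l,\beta}}a_{k-1-j})\hat \d^{-1}b_j\hat \d+a_{k-1-j}\hat \d^{-1}(\partial_{t_{1,l,\beta}}b_j)\hat \d\Big).
\end{equation*}
Each of $\partial_{t_{1,l,\beta}}a_{k-1-j}$ and $\partial_{t_{1,l,\beta}}b_j$ is then replaced by the three terms on the right of \eqref{lqstar}: a $(\M C_l^{(\beta,r)})_{\geq1}$-piece, a $\delta_{\alpha\beta}$-shift piece carrying the coefficient $(\,\cdot\,+\tfrac l2)$, and an $F_{l-1}^{(\beta,r)}$-piece. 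I would then sort the resulting six families of summands into a local-commutator part, a ghost part and an anomalous $\delta_{\alpha\beta}$ part.

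Since $(\M C_l^{(\beta,r)})_{\geq1}$ is a genuine differential operator, Lemma \ref{lemm} applies verbatim to each monomial $a_{k-1-j}\hat \d^{-1}b_j\hat \d$: the two $(\M C_l^{(\beta,r)})_{\geq1}$-families recombine, with the unchanged weights $[j-\tfrac12(k-1)]$, into $[(\M C_l^{(\beta,r)})_{\geq1},F_k^{(\alpha,r)}]_{\leq0}$. This is the routine half of the argument and yields the first summand of \eqref{PL}.

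The remaining families are the crux. The two $\delta_{\alpha\beta}$-shift families live on the common monomials $a_{k+l-2-i}\hat \d^{-1}b_i\hat \d$ after setting $i=j$ in the $a$-family and $i=j+l-1$ in the $b$-family, and their coefficients obey the single-variable polynomial identity
\begin{equation*}
\big[i-\tfrac12(k-1)\big]\big(k-1-i+\tfrac l2\big)+\big[i-l+1-\tfrac12(k-1)\big]\big(i-\tfrac l2+1\big)=(k-l+1)\big[i-\tfrac12(k+l-2)\big],
\end{equation*}
whose right-hand side is precisely the coefficient of $a_{k+l-2-i}\hat \d^{-1}b_i\hat \d$ in $(k-l+1)F_{k+l-1}^{(\alpha,r)}$; the quadratic terms in $i$ cancel and matching the linear and constant parts fixes the factor $k-l+1$. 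In parallel, the two $F_{l-1}^{(\beta,r)}$-families must be matched against the ghost-ghost commutator $[F_{l-1}^{(\beta,r)},F_k^{(\alpha,r)}]_{\leq0}$. Because $F_{l-1}^{(\beta,r)}$ is not local but of ghost type $\sum M\hat \d^{-1}N\hat \d$, expanding this bracket with the operator calculus underlying \eqref{operatorXL} produces, besides the two ``action'' terms visible in \eqref{lqstar}, additional cross terms, and the collapse $C_m^{(\alpha,r)}C_{m'}^{(\beta,r)}=\delta_{\alpha\beta}C_{m+m'}^{(\alpha,r)}$ (coming from $R_\alpha R_\beta=\delta_{\alpha\beta}R_\alpha$) is what funnels everything onto a single shifted index.

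The hard part is this last reconciliation. The two $\delta_{\alpha\beta}$-shift families run over mismatched ranges, $i\in\{0,\dots,k-1\}$ for the $a$-family and $i\in\{l-1,\dots,k+l-2\}$ for the $b$-family, whereas $(k-l+1)F_{k+l-1}^{(\alpha,r)}$ needs the full range $i\in\{0,\dots,k+l-2\}$; the end contributions are not zero term-by-term, and they are exactly cancelled by the cross terms of $[F_{l-1}^{(\beta,r)},F_k^{(\alpha,r)}]_{\leq0}$. Carrying this out honestly --- simultaneously controlling the boundary indices of the resummation and the cross terms of the ghost-ghost bracket, and checking that for $\alpha\neq\beta$ the whole anomaly disappears --- is where the genuine effort lies; once it is done, summing the three parts reproduces \eqref{PL}.
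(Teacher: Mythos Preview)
Your approach is the same as the paper's: differentiate the defining sum termwise, substitute \eqref{lqstar}, and regroup into the commutator piece plus the anomalous $\delta_{\alpha\beta}$ piece. The paper's proof writes out exactly the expansion you describe and then simply asserts that it ``can be simplified to'' \eqref{PL}; your explicit polynomial identity and your use of Lemma~\ref{lemm} for the local-commutator part are precisely the missing details.

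One remark on the ``hard part'' you flag. In the paper the input formula \eqref{lqstar} is obtained from \eqref{PLqr}, which is stated only for $l=0,1,2$; for these values $F_{l-1}^{(\beta,r)}=0$, so the ghost--ghost commutator $[F_{l-1}^{(\beta,r)},F_k^{(\alpha,r)}]_{\leq0}$ is trivial and the boundary mismatch you worry about in the $\delta_{\alpha\beta}$ resummation disappears as well (the endpoint contributions vanish because the coefficients $k-1-j+\tfrac l2$ and $j+\tfrac l2$ are zero at the offending indices, or match the target directly). Your more ambitious analysis for general $l$ is correct in spirit --- the cross terms of the ghost--ghost bracket are exactly what would absorb the range mismatch --- but the paper neither proves nor uses the lemma beyond $l\le2$ (cf.\ the opening sentence of the proof of Theorem~\ref{alg}), so this extra bookkeeping is not required to reproduce the paper's argument.
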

\begin{proof} Using eq.\eqref{lqstar}, a straightforward calculation implies
\begin{eqnarray}\label{partialyk}
\begin{split}
\partial_{t_{1,l,\beta}} F_k^{(\alpha,r)}&=\partial_{t_{1,l,\beta}} (\sum_{j=0}^{k-1}[j-\frac{1}{2}(k-1)]C_{k-1-j}^{(\alpha,r)}(\phi^{(r)})\hat \d^{-1}(\hat \d^{-1}C_{j}^{(\alpha,r)*}\hat \d)(\psi^{(r)}))\\
&=\sum_{j=0}^{k-1}[j-\frac{1}{2}(k-1)](\partial_{t_{1,l,\beta}}(C_{k-1-j}^{(\alpha,r)}(\phi^{(r)}))\hat \d^{-1}(\hat \d^{-1}C_{j}^{(\alpha,r)*}\hat \d)(\psi^{(r)})\\
&+C_{k-1-j}^{(\alpha,r)}(\phi^{(r)})\hat \d^{-1}(\partial_{t_{1,l,\beta}}(\hat \d^{-1}C_{j}^{(\alpha,r)*}\hat \d)(\psi^{(r)})))\\
&=\sum_{j=0}^{k-1}[j-\frac{1}{2}(k-1)][(\M C_l^{\beta,r})_{\geq 1}+F_{l-1}^{(\beta,r)}](C_{k-1-j}^{(\alpha,r)}(\phi^{(r)}))\hat \d^{-1}(\hat \d C_l^{\beta,r}\hat \d^{-1})^*(\psi^{(r)})\\
&+\sum_{j=0}^{k-1}[j-\frac{1}{2}(k-1)](k-j-1+\frac{l}{2})C_{k+l-2-j}^{\alpha,r}(\phi^{(r)})\hat \d^{-1}(\hat \d C_j^{\beta,r}\hat \d^{-1})^*(\psi^{(r)})\delta_{\alpha,\beta}\\
&-\sum_{j=0}^{k-1}[j-\frac{1}{2}(k-1)]C_{k-1-j}^{(\alpha,r)}(\phi^{(r)})\hat \d^{-1}[(\hat \d \M C_l^{\beta,r}\hat \d^{-1})^*_{\geq 1}+(\hat \d F_{l-1}^{(\beta,r)}\hat \d^{-1})^*](\hat \d C_j^{\beta,r}\hat \d^{-1})^*(\psi^{(r)})\\
&+\sum_{j=0}^{k-1}[j-\frac{1}{2}(k-1)](j+\frac{l}{2})C_{k-1-j}^{(\alpha,r)}(\phi^{(r)})\hat \d^{-1}(\hat \d C_{j+l-1}^{\beta,r}\hat \d^{-1})^*(\psi^{(r)})\delta_{\alpha,\beta},
\end{split}
\end{eqnarray}
which can be simplified  to eq.\eqref{PL}.
\end{proof}

Now it is time to identity the algebraic structure of the
additional symmetry flows of the  constrained $N$-component mKP hierarchy.
\begin{theorem}\label{alg}
The additional flows $\partial_{t_{1,k,\beta}}$ of the  constrained $N$-component mKP hierarchy form the
positive half of Virasoro algebra, i.e., for $l, k\geq 0,1\leq \alpha,\beta\leq r,$
\begin{equation}
[\partial_{t_{1,l,\alpha}},\partial_{t_{1,k,\beta}}]=(k-l)\delta_{\alpha,\beta}\partial_{t_{1,k+l-1,\alpha}}.
\end{equation}
\end{theorem}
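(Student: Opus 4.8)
The plan is to verify the bracket by letting both additional flows act on the dressing operator $S=P^{(r)}$ (equivalently, on the eigenfunctions $\phi^{(r)},\psi^{(r)}$) and to recognize the outcome as a single additional flow. Abbreviate the generator of $\partial_{t_{1,k,\beta}}$ by $G_k^{(\beta)}:=-(\M C_k^{(\beta,r)})_{\leq 0}+F_{k-1}^{(\beta,r)}$, so that $\partial_{t_{1,k,\beta}}S=G_k^{(\beta)}S$ in accordance with \eqref{tkflow}. The Leibniz rule for two such flows gives
\begin{equation*}
[\partial_{t_{1,l,\alpha}},\partial_{t_{1,k,\beta}}]S=\big(\partial_{t_{1,l,\alpha}}G_k^{(\beta)}-\partial_{t_{1,k,\beta}}G_l^{(\alpha)}-[G_l^{(\alpha)},G_k^{(\beta)}]\big)S,
\end{equation*}
so it is enough to show that the operator in parentheses equals $(k-l)\delta_{\alpha,\beta}\,G_{k+l-1}^{(\alpha)}$.

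Two inputs feed the computation. First, since $\M C_k^{(\beta,r)}=S\big(\hat\Gamma^{(r)}E_\beta^{(r)}\hat \d^{k}\big)S^{-1}$ is the dressing of a core that does not depend explicitly on $t_{1,l,\alpha}$, it evolves by conjugation, $\partial_{t_{1,l,\alpha}}(\M C_k^{(\beta,r)})=[G_l^{(\alpha)},\M C_k^{(\beta,r)}]$, whence $\partial_{t_{1,l,\alpha}}(\M C_k^{(\beta,r)})_{\leq 0}=([G_l^{(\alpha)},\M C_k^{(\beta,r)}])_{\leq 0}$. Second, the evolution of the correction term is supplied verbatim by \eqref{PL}. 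I would then substitute $G_l^{(\alpha)}=-(\M C_l^{(\alpha,r)})_{\leq 0}+F_{l-1}^{(\alpha,r)}$ everywhere, expand $[G_l^{(\alpha)},G_k^{(\beta)}]$ into its four pieces, and sort all resulting terms into three groups: the purely differential $(\M C)_{\leq 0}$ brackets, the mixed $(\M C)$–$F$ brackets, and the pure $F$ brackets together with the explicit $(k-l+1)F$-terms produced by \eqref{PL}.

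For the differential group, the standard decomposition of a bracket of projected operators reduces everything to the single core commutator $[\M C_l^{(\alpha,r)},\M C_k^{(\beta,r)}]$; dressing this out and using the canonical relation $[\L,\M]=E$ (equivalently $[\sum_\beta d_\beta^s\hat \d^sE_\beta,\hat\Gamma^{(r)}]=E$) together with the orthogonality $E_\alpha^{(r)}E_\beta^{(r)}=\delta_{\alpha,\beta}E_\alpha^{(r)}$ produces a Witt-type bracket whose normalized structure constant and index are exactly those recorded in \eqref{PL} and \eqref{lqstar}, namely $(k-l)\delta_{\alpha,\beta}$ and $k+l-1$. Thus this group contributes $-(k-l)\delta_{\alpha,\beta}(\M C_{k+l-1}^{(\alpha,r)})_{\leq 0}$ up to $\geq 1$ remainders, and those remainders are annihilated against the $([G_l,\M C_k])_{\leq 0}$ contributions of the two single-flow derivatives. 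The mixed brackets cancel against the $[(\M C_l)_{\geq 1},F_{k-1}]_{\leq 0}$ terms of \eqref{PL}, while the pure-$F$ remainder, combined with the two explicit $F$-terms, collapses to $(k-l)\delta_{\alpha,\beta}F_{k+l-2}^{(\alpha,r)}$. Adding the two surviving groups yields $(k-l)\delta_{\alpha,\beta}\big(-(\M C_{k+l-1}^{(\alpha,r)})_{\leq 0}+F_{k+l-2}^{(\alpha,r)}\big)=(k-l)\delta_{\alpha,\beta}\,G_{k+l-1}^{(\alpha)}$, which is the claim.

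Conceptually this is the same mechanism as the second proof of the quantum-torus theorem: the bare symbols $\ell_k^{(\alpha)}:=\hat\Gamma^{(r)}E_\alpha^{(r)}\hat \d^{k}$ close into a multi-component positive Witt (Virasoro) algebra because $[\hat \d^{\,n}E_\beta^{(r)},\hat\Gamma^{(r)}]$ is again a monomial of lower $\hat \d$-degree, and the entire content of the calculation is that neither the projection $(\cdot)_{\leq 0}$ nor the eigenfunction correction $F$ spoils this structure. The main obstacle is precisely the bookkeeping of the last step: for the constrained hierarchy $(\M C)_{\leq 0}$ itself carries a $\phi^{(r)}\hat \d^{-1}\psi^{(r)}\hat \d$ tail by \eqref{Lk}, so the ``differential'' and ``$F$'' contributions are not cleanly separated, and one must check that the corrections built into $F_{k-1}^{(\beta,r)}$ through \eqref{y} conspire to rebuild the full generator $G_{k+l-1}^{(\alpha)}$ with coefficient exactly $(k-l)$ — neither $2(k-l)$ (the naive sum of the two $F$-derivative terms) nor $(k-l)$ times only its differential part. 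Checking the companion relation on $\psi^{(r)}$ through the adjoint formulas in \eqref{BAfunction} then completes the argument.
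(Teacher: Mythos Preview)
Your approach is correct and follows the same underlying mechanism as the paper: compute the commutator of two additional flows by expanding the generators, use the dressing relation $\partial_{t_{1,l,\alpha}}(\M C_k^{(\beta,r)})=[G_l^{(\alpha)},\M C_k^{(\beta,r)}]$ together with \eqref{PL}, and reduce everything to the core Witt bracket $[\hat\Gamma^{(r)}E_\alpha^{(r)}\hat\d^{\,l},\hat\Gamma^{(r)}E_\beta^{(r)}\hat\d^{\,k}]=(k-l)\delta_{\alpha,\beta}\hat\Gamma^{(r)}E_\alpha^{(r)}\hat\d^{\,k+l-1}$. Two tactical differences are worth noting. First, the paper applies both flows to $\L$ rather than to $S$; this is equivalent but leads to commutators $[\,\cdot\,,\L]$ throughout, which interact directly with \eqref{tkflow} and \eqref{MLK}. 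Second, and more substantively, the paper does \emph{not} attempt the fully symmetric case: it restricts to $l\in\{0,1,2\}$, for which $F_{l-1}^{(\alpha,r)}=0$ by \eqref{y12}, so all the ``pure-$F$'' and half of the ``mixed'' terms you worry about simply vanish, and the identity $[(\M C_l^{(\alpha,r)})_{\leq 0},F_{k-1}^{(\beta,r)}]_{\leq 0}=[(\M C_l^{(\alpha,r)})_{\leq 0},F_{k-1}^{(\beta,r)}]$ (since $F_{k-1}^{(\beta,r)}$ is already $\leq 0$) makes the remaining cancellations immediate. The general bracket then follows from these special ones by the Jacobi identity, since the operators with $l=1,2$ already generate the positive Witt algebra. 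Your version, by contrast, keeps both $F$-corrections in play and must therefore confront exactly the bookkeeping you flag in your last paragraph; the paper sidesteps that obstacle rather than resolving it.
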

\begin{proof}
Here we only consider the case when $l=0,1,2.$ Using $[(\M C_{l}^{(\alpha,r)})_{\leq 0},F_{k-1}^{(\beta,r)}]_{\leq 0}=[(\M C_{l}^{(\alpha,r)})_{\leq 0},F_{k-1}^{(\beta,r)}]$
and the Jacobi identity, one can derive the following computation

\begin{eqnarray*}
&&[\partial_{t_{1,l,\alpha}},\partial_{t_{1,k,\beta}}]\L\\
&=&\partial_{t_{1,l,\alpha}}([-(\M C_{k}^{(\beta,r)})_{\leq 0},\L]+[F_{k-1}^{\beta,r},\L])-\partial_{t_{1,k,\beta}}([-(\M
 C_l^{\alpha,r})_{\leq 0},\L]+[F_{l-1}^{\alpha,r},\L])) \\
&=&\partial_{t_{1,l,\alpha}} [-(\M C_{k}^{(\beta,r)})_{\leq 0},\L] +[\partial_{t_{1,l,\alpha}} F_{k-1}^{\beta,r},\L]+[F_{k-1}^{\beta,r},\partial_{t_{1,l,\alpha}} \L]+[\partial_{t_{1,k,\beta}}(\M
 C_l^{\alpha,r})_{\leq 0},\L]\\
&&+[(\M
 C_l^{\alpha,r})_{\leq 0},\partial_{t_{1,k,\beta}} \L]-[\partial_{t_{1,k,\beta}} F_{l-1}^{\alpha,r},\L]-[F_{l-1}^{\alpha,r},\partial_{t_{1,k,\beta}} \L]\\
&=&[[(\M C_{l}^{(\alpha,r)})_{\leq 0},\M C_k^{\beta,r}]_{\leq 0},\L] +[-(\M C_{k}^{(\beta,r)})_{\leq 0},[-(\M C_l^{(\alpha,r)})_{\leq 0}
,\L]] \\
&&+[[(\M C_l^{\alpha,r})_{\geq 1},F_{k-1}^{\beta,r}]_{\leq 0}+(k-l)F_{k+l-2}^{(\beta,r)}\delta_{\alpha,\beta},\L]\\
&&+[F_{k-1}^{\beta,r},[-(\M C_l^{\alpha,r})_{\leq 0},\L]]
+[[-(\M C_{k}^{(\beta,r)})_{\leq 0}+F_{k-1}^{\beta,r},\M C_l^{\alpha,r}]_{\leq 0},\L]\\
&&+[(\M
C_l^{\alpha,r})_{\leq 0},[-(\M C_{k}^{(\beta,r)})_{\leq 0}+F_{k-1}^{\beta,r},\L]]\\
&=&[[(\M C_{l}^{(\alpha,r)})_{\leq 0},\M C_k^{\beta,r}]_{\leq 0},\L] +[-(\M C_{k}^{(\beta,r)})_{\leq 0},[-(\M C_l^{(\alpha,r)})_{\leq 0}
,\L]] \\
&&+[(k-l)F_{k+l-2}^{(\beta,r)}\delta_{\alpha,\beta},\L]+[[-(\M C_{k}^{(\beta,r)})_{\leq 0},\M C_l^{\alpha,r}]_{\leq 0},\L]\\
&&+[(\M
C_l^{\alpha,r})_{\leq 0},[-(\M C_{k}^{(\beta,r)})_{\leq 0},\L]]\\
&=&(k-l)[-\delta_{\alpha,\beta}(\M C_{l+k-1}^{(\alpha,r)})_{\leq 0},\L] +[(k-l)F_{k+l-2}^{(\beta,r)}\delta_{\alpha,\beta},\L]\\
&=&(k-l)\delta_{\alpha,\beta}\partial_{t_{1,k+l-1,\alpha}}\L.
\end{eqnarray*}
The other case can be similarly proved.
\end{proof}

{\bf Acknowledgments:}\\
{\noindent \small
Chuanzhong Li  is  supported by the National Natural Science Foundation of China under Grant No. 11571192 and K. C. Wong Magna Fund in
Ningbo University.
}

%%%%%%%%%%%%%%%%%%%%%%%%%%%%%%%%%%%%%%%%%%%%%%%%%%%%%%%%%%%%%%

\end{document}